\tikzstyle{every picture}+=[remember picture]
\lstdefinestyle{customc}{
  belowcaptionskip=1\baselineskip,
  breaklines=true,
  frame=L,
  xleftmargin=\parindent,
  language=C,
  showstringspaces=false,
  basicstyle=\footnotesize\ttfamily,
  keywordstyle=\bfseries\color{green!40!black},
  commentstyle=\itshape\color{purple!40!black},
  identifierstyle=\color{blue},
  stringstyle=\color{orange},
}
\newcommand\fixmet[1]{}
\DeclareMathOperator{\A}{\textsf{A}}
\newtheorem{prop}{Proposition}[section]
\newtheorem{LM}{Lemma}[section]
\newtheorem{thm}{Theorem}[section]
\newtheorem{df}{Definition}[section]
\newtheorem{cor}{Corollary}[section]
\theoremstyle{pourlesremarques}
\newtheorem{rem}{Remark}[section]
\newtheorem{ex}{Example}[section]
\newcommand{\R}{\mathbb{R}}
\renewcommand{\l}{\lambda}
\newcommand{\C}{\mathbb{C}}
\newcommand{\Q}{\mathbb{Q}}
\renewcommand{\H}{\mathbb{H}}
\newcommand{\Z}{\mathbb{Z}}
\newcommand{\1}{\mathbf{1}}
\newcommand{\D}{\Delta}
\renewcommand{\i}{\mathbf{id}_E}
\newcommand{\f}{\mathbf{f}}
\newcommand{\xx}{\mathbf{x}}
\begin{document}


\title{Characterization of Termination for Linear Loop Programs}

\author{Rachid Rebiha\thanks{Instituto  de Computa\c{c}\~{a}o, Universidade Estadual  de Campinas, 13081\-970  Campinas,  SP.  Pesquisa desenvolvida com suporte financeiro da FAPESP, processos 2011\/08947\-1 e FAPESP BEPE 2013\/04734\-9 } \and  Arnaldo Vieira Moura\thanks{Instituto  de Computa\c{c}\~{a}o, Universidade Estadual  de Campinas, 13081\-970  Campinas,  SP.} \and Nadir Matringe \thanks{ Universit\'{e} de Poitiers, Laboratoire Math\'{e}matiques et Applications and Institue de Mathematiques de Jussieu Universit\'{e} Paris 7-Denis Diderot, France.}}


\maketitle

\begin{abstract}
We present necessary and sufficient conditions for the termination of linear homogeneous programs. 
 We also develop a complete method to check termination for this class of programs. Our complete characterization of termination for such programs is based on linear algebraic methods. We reduce the verification of the termination problem to checking the orthogonality of a well determined vector space and a certain vector, both related to loops in the program. Moreover, we provide theoretical results and symbolic computational methods guaranteeing the soundness, completeness and numerical stability of the approach. 
Finally, we show that it is enough to interpret variable values over a specific countable number field, or even over its ring of integers, when one wants to check termination over the reals.

\end{abstract}

\section{Introduction}\label{intro}
Static program analysis \cite{Cousot, manna2, CousotHalbwachs78-POPL} is used to check that a software is free of defects, such as buffer overflows or segmentation faults, which are safety properties, or termination, which is a liveness property. 
Verification of temporal properties of infinite state systems \cite{Sipma-1999} is another example.  
Proving termination of $\textsf{while}$ loop programs is necessary for the verification of liveness properties that any well behaved and engineered system, or any safety critical embedded system, must guarantee. 
We could 
list here many verification approaches that are only practical depending on the facility with which termination can be automatically determined. 
More recent work on automated termination analysis of imperative loop programs has focused on partial decision procedures based on the discovery and synthesis of ranking functions.
Such functions map the loop variable to a well-defined domain where their value decreases at each iteration of the loop \cite{Colon:2001, Colon02}. 
Several interesting approaches, based on the generation of \emph{linear} ranking functions, have been proposed \cite{Bradley05linearranking, Bradley05terminationanalysis} for loop programs where the guards and the instructions can be expressed in a logic supporting linear arithmetic. 
For the generation of such functions there are effective heuristics \cite{Dams,Colon02} and, in some cases, there are also complete methods \cite{Podelski04}.  
On the other hand, it is easy to generate a simple linear terminating loop program that does not have a linear ranking function. 
In these cases, complete synthesis methods \cite{Podelski04} fail to provide a conclusion 
about the termination or nontermination of such programs.  

In this work we are motivated by the termination problem for linear $\textsf{while}$  loop programs. 
In this class of loop programs, the loop condition is a conjunction of linear inequalities and the assignments to each of the variables in the loop instruction block are of an affine or linear form. 
In matrix notation,  \emph{linear loop programs} 
can be represented as
$$\textsf{while}\ (B x > b),\ \{x:=Ax+c\},$$ 
for $x$ and $c$ in $\R^n$, $b$ in $\R^m$, and $A$ and $B$ real matrices of 
size $n\times n$ and $m\times n$, respectively.  
The termination analysis for this class of linear programs can be reduced to the termination problem of homogeneous programs with 
one loop condition, \emph{i.e.} when $m=1$, $b$ is zero and $c$ is a zero vector \cite{Braverman, tiwaricav04}.
The really difficult step being the reduction to $m=1$, while 
the reduction to $b$ and $c$ being zero is immediate. 
We focus on the termination of 
this type of program with one loop condition, and obtain results as sharp and complete as one could hope. 
At this point, it is worth mentioning some recent work on \emph{asymptotically non-terminating initial variable values} generation
techniques~\cite{TR-IC-14-03}.
Amongst many other results, we obtain methods that can be adapted here in order to extend our termination analysis for general linear programs, \emph{i.e.} when $m$ is arbitrary.

Despite tremendous progress over the years \cite{Braverman, Bradley05Manna, Chen2012,Cousot05-VMCAI, Cousot:2012, Cook:2006, Ben, 2013:POPL}, the problem of finding a practical, sound and complete method for determining termination or non termination remains very challenging for this class of programs, and for all initial 
variable values. 
We also note that some earlier works~\cite{TR-IC-13-07, TR-IC-13-08} have inspired the methods developed here.

We summarize our contributions as follows:
\vspace*{-1ex}
\begin{description}
\item[\it Preliminary result:]
First we prove a sufficient condition for the termination of
homogeneous linear programs.  This result is also stated in~\cite{tiwaricav04}, but 
some shortcomings in that proof sketch require further elaboration. We
closed those gaps in a solid mathematical way, with some obstacles not being so
easy to overcome. 
We return to this point in more detail at
Remark~\ref{rem1}.  Our new proof of this sufficient condition
requires nontrivial topological and algebraic arguments.  On the other
hand, this sufficient condition is not a necessary condition for
termination of linear homogeneous programs.  Before we list our main
contributions, it is important to note that the
works~\cite{tiwaricav04,Braverman} produce some decidability results for
this type of programs.
However, for programs with one loop condition, our characterization of termination is 
much simpler, very explicit, and straightforwardly leads to 
much faster algorithm for checking termination. 
See also Section~\ref{discus} for a more detailed comparison.

\item[\it Main contributions:]\mbox{}\\
\textit{(i)} 
We give a \emph{necessary and sufficient condition} (NSC, for short) for the termination of linear homogeneous programs 
with one loop condition. In fact, this NSC exhibits a complete characterization of termination for such programs, and gives decidability results for all initial variable values.  

\textit{(ii)} Moreover, departing from this NSC, we show the scalability of our approach by demonstrating that one can directly extract a sound and complete computational method to determine termination of such programs. 
We reduce the termination analysis to the problem of checking if a specific vector,
related to the loop encoding condition, belongs to a specific vector space related to the eigenvalues of the matrix encoding assignments to the loop variables.       
The analysis of our associated algorithms shows that our method has a much better computational time complexity. We show that the method, based on three computational steps running in polynomial time complexity, is of a lower complexity than basic routines that form the mathematical foundations of previous methods \cite{tiwaricav04,Braverman}. 

\textit{(iii)} We provide theoretical results guaranteeing the soundness and completeness of the termination analysis while restricting  variable interpretations over a specific countable sub-ring of $\R^n$. In other words, we show that it is enough to interpret variable values over a specific countable 
field --- a number field, or even  its ring of integers, --- when one wants to check the termination over the reals. 
By so doing, we circumvent difficulties such as rounding errors. Those results enable our symbolic computational methods to rely on closed-form algebraic expression and numbers.
\end{description}


The rest of this article is organized as follows. Section \ref{prelim} is a preliminary section where 
we introduce our computational model for programs, the notations for the rest of the paper, and some key notions of linear algebra used  to develop our computational methods. Section \ref{decid}  
develops 
our theoretical results and a very useful necessary and sufficient condition, in Subsection \ref{newdecidability}, which allows us to propose the complete computational method illustrated in Section \ref{run-ex}, and fully described in Section \ref{algo}. 
  In the important Section \ref{count}, we show that it is enough to interpret the variable values over a countable field in order to determine  program termination over the reals. 
We provide a discussion of related works in Section \ref{discus}.
Finally, Section \ref{conclusion} concludes the paper.

\section{Linear Algebra and Linear Loop Programs}\label{prelim}

We recall classical facts from linear algebra.  Let $E$ be a real
vector space and $\mathbf{A}\in End_\R(E)$, the space of
$\R$-linear maps from $E$ to itself.  
Let $E^\star$ be the set of linear functionals in $E$.
  We denote by
$\mathcal{M}(p,q,\R)$ the space of $p\times q$ matrices, and if $p=q$
we simply write $\mathcal{M}(p,\R)$. We will denote by $\mathbb{K}$
the $\R$ or $\C$ fields.  If $A\in\mathcal{M}(m,n,\mathbb{K})$, with entry $a_{i,j}$ in position
$(i,j)$, we will sometimes denote it by $(a_{i,j})$.  If $B$ is a
basis for $E$, we denote by $A_B=Mat_B(\mathbf{A})$ the matrix of
$\mathbf{A}$ in the basis $B$, and we have 
$A_B\in\mathcal{M}(n,\R)$.  Let $I_n$ be the identity matrix in
$\mathcal{M}(n,\R)$, and $\i$ the identity of $E$.  The transpose of
the matrix $A=(a_{i,j})$ is the matrix $A^{\top}=(b_{i,j})$ where
$b_{i,j} = a_{j,i}$.  The kernel of $A$, also called its
\emph{nullspace} and denoted by $Ker(A)$, is the set $\{v \in
\mathbb{K}^n\ |\ A\cdot v=0_{\mathbb{K}^m} \}$.  
Let
$A$ be a square matrix in $\mathcal{M}(n,\mathbb{K} )$.  A nonzero
vector $x \in \mathbb{K}$ is an eigenvector of $A$ associated with
an eigenvalue $\lambda\in \mathbb{K}$ if $A\cdot x = \lambda x$,
\emph{i.e.}, $(A-\lambda I_n)\cdot x = 0$.  The nullspace of
$(A-\lambda I_n)$ is called the \emph{eigenspace} of $A$ associated
with eigenvalue $\lambda$. A non-zero vector $x$ is said to be a
\emph{generalized eigenvector} of $A$ corresponding to $\lambda$ if
$(A-\lambda I_n)^k \cdot x = 0$ for some positive integer $k$. The
spaces $Ker((A-\lambda I_n)^k)$ form an increasing larger sequence of
subspaces of $E$, which is stationary for $k\geq e$, for some $e\leq
n$. We call the subspace $Ker((A-\lambda I_n)^e)= Ker((A-\lambda
I_n)^n)$ the \emph{generalized eigenspace} of $A$ associated with
$\l$, and its nonzero elements are exactly the generalized eigenvectors.
We denote by $\langle \ ,\ \rangle$ the canonical scalar product on
$\R^n$.
As it is standard in static program analysis, 
a primed symbol $x'$ refers to the next state value of $x$ after a
transition is taken.  Next, we present \emph{transition systems} as
representations of imperative programs and \emph{automata} as their
computational models.

\begin{df}
In a \emph{transition system} $\langle x, L, \mathcal{T}, l_0,\Theta \rangle$,   $x=(x_1, ...,x_n)$ is a set of variables, $L$ is a set of locations and $l_0\in L$ is the initial location. A \emph{state} is given by an interpretation of the variables in $x$. A \emph{transition} $\tau \in \mathcal{T}$ is given by a tuple $\langle l_{pre}, l_{post}, q_{\tau}, \rho_{\tau} \rangle$, where $l_{pre}$ and $l_{post}$ designate the pre- and post-locations of $\tau$, respectively, and the transition relation $\rho_{\tau}$ is a first-order assertion over $x\cup x'$. The transition guard $q_{\tau}$ is a conjunction of inequalities over $x$. $\Theta$ is the initial condition, given as a first-order assertion over $x$. The transition system is said to be \emph{linear} when $\rho_{\tau}$ is an affine form.\qed 
\end{df}

A loop program, defined next, is a special kind of transition system.
We also establish some matrix notations to represent loop programs, where the effects of 
\emph{sequential} linear assignments are described as \emph{simultaneous} updates. Departing from sequential instructions, we use syntatic and common propagation procedures to obtain the equivalent simultaneous systems expressed in matrix notations (see Definition \ref{mat}.).  

\begin{df}\label{mat} Let $P=\langle x, \{l\}, \mathcal{T}, l,\Theta \rangle$ be a transition system
with $x=(x_1,...,x_n)$ and $\mathcal{T}=\{\langle l, l, q_{\tau}, \rho_{\tau} \rangle\}$.  
Then $P$ is a \emph{linear loop program} if:
\begin{itemize}
\item The transition guard is a  conjunction of linear inequalities. We represent the loop condition in matrix form as $F x > b$ where $F\in\mathcal{M}(m,n,\R)$ and $b\in\R^m$.
Which means that each coordinate of the vector $F x$ is  greater than the corresponding coordinate of vector $b$.
\item The transition relation is an affine or linear form. We represent the linear assignments in matrix form as $x:=Ax+c$, where $A\in\mathcal{M}(n,\R)$ and $c\in\R^n$.  \qed
\end{itemize}
\end{df}    
The most general \emph{linear loop program} $P=P(A,F,b,c)$ is thus written 
$$\textsf{while}\ (F x > b)\ \{x:=Ax+c\}.$$

In this work, one needs first to focus mainly on the following class of linear loop programs. 
\begin{df}
We denote by $P^{\H}$ the set of programs where all linear assignments consist of \emph{homogeneous} expressions, and where the linear loop condition  consists of at most one inequality.  \qed 
\end{df}

If $P$ is in $P^{\H}$, then $P$ will be interpreted in matrix terms as 
$$\textsf{while}\ (\langle f, x \rangle > 0) \ \{x:=Ax\},$$ 
where $f$ is a $(n\times 1 )$-vector corresponding to the loop condition, and  $A\in \mathcal{M}(n,\R)$ is related to the list of assignments in the loop. 
In this case, we say that $P$ has a \emph{homogeneous} form and it will be identified as $P(A,f)$.
    
Consider a program $P(A,f)$, where
$A\in\mathcal{M}(n,\R)$, $f\in\mathcal{M}(1,n,\R)$. 
Alternatively, we may consider $\A\in End_\R(E)$, $\f\in E^*$ and write  
$$P(\A,\f): \textsf{while}\ (\f(\xx) > 0)  \{\xx:=\A \xx\}.$$
Fixing a basis $B$ of $E$ we can write $A=Mat_B(\A)$, $f=Mat_B(\f)$, $x=Mat_B(\xx)$, and so on. 
We now define  termination for such programs. 

\begin{df}\label{ter}
Program $P(\A,\f)$ terminates on input $\xx\in E$ if and only if
there exists $k \geq 0$ such that $\f(\A^{k}(\xx))$ is not positive.
Alternatively, for $A\in\mathcal{M}_n(\R)$, and $f\in\mathcal{M}_{1,n}(\R)$, we
say that $P(A,f)$ terminates on input $x\in \R^n$, if and only if
there exists $k\geq 0$, such that $\langle A^{k} x , f\rangle$ is not
positive.   
Thus, a program $P(\A,\f)$ is non-terminating if and only if there
exists an input $\xx\in E$ such that $\f(\A^{k}(\xx)) > 0$ for all
$k\geq 0$. In matrix terms, $P(A,f)$ is non-terminating on input
 $x\in \R^n$ if and only if $\langle A^k x, f \rangle > 0$ for
all $k\geq 0$.
\qed
\end{df}

\section{Linear Program Termination}\label{decid}



First we prove a \emph{sufficient} condition for the termination of homogeneous linear programs, already stated in \cite{tiwaricav04}.
We note that he proof of sufficiency in \cite{tiwaricav04} does not go through, and needed to be amended,
which was not a trivial task.
Then we present the main result, which provides the first \textit{necessary and sufficient} condition for the termination problem for the class of linear homogeneous programs. 

\subsection{Sufficiency and Homogeneous Linear Programs}\label{sufficient}

We prove a sufficient condition for the termination of programs $P(A,f)\in P^{\H}$, written 
$$\textsf{while}\ (f^{\top}x >0)\ \{x:=Ax\}.$$ 

\begin{thm}\label{resprinc}
Let $n$ be a positive integer, and let $P(A,f)\in P^{\H}$. 
If $P(A,f)$ is non-terminating,
then $A$ has a positive eigenvalue.\qed
\end{thm}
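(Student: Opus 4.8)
The plan is to prove the contrapositive in spirit: assuming $P(A,f)$ is non-terminating, we exhibit a positive eigenvalue of $A$. By Definition~\ref{ter}, non-termination gives us a nonzero input $x_0 \in \R^n$ with $\langle A^k x_0, f\rangle > 0$ for all $k \geq 0$. The idea is to extract, from the orbit $(A^k x_0)_{k\geq 0}$, a direction that is (asymptotically) fixed up to a positive scalar by $A$, and whose pairing with $f$ remains positive; such a direction will be an eigenvector for a positive eigenvalue. First I would reduce to the case where $A$ is invertible and has no eigenvalue equal to a root of unity times a real on the orbit's relevant subspace — more carefully, I would decompose $\R^n$ (or its complexification) into generalized eigenspaces of $A$ and track which components of $x_0$ survive asymptotically after normalization.

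The key technical device is normalization: consider the sequence of unit vectors $y_k = A^k x_0 / \|A^k x_0\|$ (this requires $A^k x_0 \neq 0$, which follows since $\langle A^k x_0,f\rangle>0$). By compactness of the unit sphere, $(y_k)$ has a convergent subsequence $y_{k_j} \to y_\infty$ with $\|y_\infty\|=1$. The dominant behavior of $A^k x_0$ is governed by the eigenvalue(s) of largest modulus $\rho$ appearing in the spectral decomposition of $x_0$; after dividing by $\rho^k$ (or $\rho^k k^{d}$ for the appropriate Jordan-block degree $d$), the surviving part lies in the sum of generalized eigenspaces for eigenvalues of modulus exactly $\rho$. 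One then argues that $y_\infty$ (or a suitable real vector built from it) is an eigenvector. If the dominant eigenvalue were, say, negative real $-\rho$, then $\langle A^k x_0, f\rangle$ would alternate in sign for large $k$, contradicting positivity; if the dominant eigenvalues were a genuinely complex conjugate pair $\rho e^{\pm i\theta}$ with $\theta \notin \pi\Z$, the pairing $\langle A^k x_0, f\rangle$ would behave like $\rho^k k^d(c\cos(k\theta + \varphi))$ up to lower-order terms, which changes sign infinitely often — again a contradiction. Hence the dominant eigenvalue must be a positive real, and tracing through the normalization shows the limiting direction is an honest eigenvector for that positive eigenvalue.

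The main obstacle, and the place where the proof in \cite{tiwaricav04} apparently stumbled, is handling the \emph{lower-order terms} and \emph{mixed Jordan structure} rigorously. When several eigenvalues share the maximal modulus $\rho$, or when the dominant eigenvalue sits in a nontrivial Jordan block, the asymptotics of $\langle A^k x_0, f\rangle$ are a sum of terms $\rho^k k^{d_i} P_i(k)$ where the $P_i$ are quasi-polynomials (polynomials in $k$ times $\cos/\sin$ of $k\theta_i$). Showing that positivity of this sum for \emph{all} $k$ forces a positive-real dominant eigenvalue requires a careful argument about sign changes of such quasi-polynomial sequences — essentially, that a nonzero real-linear combination of $\{k^d\cos(k\theta), k^d\sin(k\theta)\}$ with at least one non-real frequency, or with a negative base, cannot stay eventually positive. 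I would isolate this as a self-contained lemma: \emph{if $(u_k)_{k\geq0}$ with $u_k = \mathrm{Re}\big(\sum_j c_j \mu_j^k\big)$ for finitely many nonzero $\mu_j\in\C$ and $c_j\in\C$ is eventually of constant sign, then the $\mu_j$ of maximal modulus among those with $c_j\neq 0$ includes a positive real.} Proving this lemma cleanly — likely via averaging/Cesàro arguments or by extracting the highest-order term through repeated differencing — is the heart of the matter, and is exactly the topological/algebraic subtlety the authors allude to in Remark~\ref{rem1}.
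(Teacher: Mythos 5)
Your route is genuinely different from the paper's, and it can be made to work. The paper proves the morphism form (Theorem \ref{casgen}) by induction on the dimension: it disposes of non-invertible $\A$ by restricting to $Im(\A)$, quotients by the $\A$-stable subspace $L=\bigcap_{k\geq 0}Ker(\f\circ\A^k)$ when it is nonzero (Lemma \ref{quotient} pulls the eigenvalue back to $\A$), and in the remaining case uses that $(\f,\f\circ\A,\dots,\f\circ\A^{n-1})$ is a basis of $E^*$ to place the orbit inside the positive orthant, so that the convex cone it spans is $\A$-stable; Proposition \ref{casregulier} then produces a positive eigenvalue by applying Brouwer's fixed point theorem to the compact convex slice of that cone in the simplex $\{x_i\geq 0,\ \sum_i x_i=1\}$. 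You instead expand the scalar sequence $u_k=\langle A^kx_0,f\rangle=\sum_j p_j(k)\mu_j^k$ along the Jordan decomposition and reduce everything to the lemma that a nonzero, eventually sign-constant exponential-polynomial sequence has a positive real root among its dominant contributing roots; that lemma is true, and your Ces\`aro suggestion does close it: after dividing by $\rho^k k^d$, the surviving trigonometric sum $s_k=\sum_j a_j e^{ik\theta_j}$ with all $\theta_j\not\equiv 0$ has Ces\`aro mean $0$ but mean square $\sum_j|a_j|^2>0$, which is incompatible with $s_k\geq -o(1)$. Two caveats: state the lemma with polynomial coefficients $p_j(k)$ rather than constants $c_j$, since constants do not cover nontrivial Jordan blocks, and drop the claim that the normalized limit direction $y_\infty$ is an eigenvector --- when several eigenvalues share the maximal modulus it need not be, but you do not need it, because once the lemma gives a positive real $\rho$ among the $\mu_j$, that $\rho$ is automatically an eigenvalue of $A$, which is all the theorem asserts. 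In short, you trade the paper's topology (convex cones, Brouwer, the quotient induction) for asymptotic analysis of linear recurrences, gaining slightly finer information (the dominant root of the pairing is positive real) at the cost that the hardest step, the positivity lemma, is only sketched in your proposal.
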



\textit{In the following discussion, we provide the complete proof of Theorem \ref{resprinc}}.
Before we complete the proof, which is a mix of topological and algebraic arguments, we need first to state the following lemmas and propositions.   
We first recall some basic facts about generalized eigenspaces. 
Let $E$ be an $\R$-vector space of finite dimension, and let $\A\in End_\R(E)$.
Let $E'$ be a subspace of $E$.
We say that $E'$ is $\A$-stable if $\A(E')\subseteq E'$. If $\l\in \R$, we denote by $E_\l(\A)$ the subspace $\{x\in E | \exists k\geq 0, (\A-\l\i)^k(x)=0\}$. This space is non zero  if and only if the input vector $x$ is an eigenvector of $\A$.
In this case, it is called the generalized eigenspace corresponding to $\l$. If 
$\chi_{\A}$ is the characteristic polynomial of $\A$, if  $d_{\l}$ is the multiplicity of the monomial $(X-\l)$ in $\chi_{\A}(X)$, which may be $0$ if $\l$ is not an eigenvalue,
 then $E_\l(\A)=Ker(\A-\l \i)^{d_{\l}}$. It is obvious that $E_\l(\A)$ is $\A$-stable. We denote by $Spec(\A)$ the set of real eigenvalues of $\A$. The following property of generalized eigenspaces was stated in the preliminaries.

\begin{prop}\label{maj}
Let $E$ be an $\R$-vector space of finite dimension, and let $\A$ belong to $End_\R(E)$. Then $E_\l(\A)=Ker(\A-\l \i)^{d_{\l}}$, for some $d_{\l}\leq n$.
In particular, 
$E_\l(\A)=Ker(\A-\l \i)^n$.\qed
\end{prop}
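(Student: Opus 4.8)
The plan is to analyze the increasing chain of kernels of the powers of $N:=\A-\l\i$, since by definition $E_\l(\A)=\bigcup_{k\geq 0}Ker(N^k)$. First I would record the elementary stabilization lemma: if $Ker(N^k)=Ker(N^{k+1})$ for some $k$, then $Ker(N^j)=Ker(N^k)$ for all $j\geq k$. This is a one-step induction: given $N^{j+2}(x)=0$ one has $N(x)\in Ker(N^{j+1})=Ker(N^j)$ by the induction hypothesis, hence $N^{j+1}(x)=0$. Consequently the chain $\{0\}=Ker(N^0)\subseteq Ker(N^1)\subseteq\cdots$ grows strictly until some first index $e$ and is constant afterwards; since each strict inclusion raises the dimension by at least $1$ and $Ker(N^e)\subseteq E$ has dimension at most $n$, we get $e\leq n$ and $E_\l(\A)=Ker(N^e)=Ker(N^n)$, which already yields the ``in particular'' clause.

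It then remains to show that $Ker(N^{d_\l})$ is already this stable subspace, i.e. that $e\leq d_\l$; monotonicity together with the stabilization lemma then give $Ker(N^{d_\l})=Ker(N^e)=E_\l(\A)$. For this I would set $V:=E_\l(\A)=Ker(N^e)$, which is $\A$-stable since it is $N$-stable and $\A=N+\l\i$, and observe that $N|_V$ is nilpotent with nilpotency index exactly $e$: indeed $N^e$ vanishes on $V$, while any $x\in Ker(N^e)\setminus Ker(N^{e-1})$ (nonempty, since $e$ is the first stabilization index) satisfies $N^{e-1}(x)\neq 0$. Hence the minimal polynomial of $N|_V$ is $X^e$, which divides its characteristic polynomial $X^{\dim V}$, so $e\leq \dim V$.

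Finally I would relate $\dim V$ to $d_\l$. Because $N|_V$ is nilpotent, $\A|_V=\l\,\i+N|_V$ has characteristic polynomial $(X-\l)^{\dim V}$; and since $V$ is $\A$-stable, completing a basis of $V$ to a basis of $E$ puts the matrix of $\A$ in block upper-triangular form, so $\chi_{\A|_V}$ divides $\chi_{\A}$. Therefore $(X-\l)^{\dim V}\mid\chi_{\A}$, whence $\dim V\leq d_\l$. Combining the inequalities gives $e\leq\dim V\leq d_\l\leq n$, which completes the argument.

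All the ingredients are standard (kernel stabilization, minimal polynomial divides characteristic polynomial, block-triangular factorization of the characteristic polynomial under an invariant subspace), so I do not expect a serious obstacle. The one point deserving care is the bookkeeping that forces the nilpotency index of $N|_V$ to be \emph{exactly} $e$ rather than merely $\leq e$ — this is precisely where the strict growth of the chain below the stabilization index is used — together with checking the degenerate case $\l\notin Spec(\A)$, where $e=0=d_\l$ and every subspace in sight is $\{0\}$.
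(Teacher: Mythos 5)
Your proof is correct, and it takes a genuinely different (and more self-contained) route than the paper. The paper's own proof is essentially a two-line remark: it takes for granted the standard fact, stated just before the proposition, that $E_\l(\A)=Ker(\A-\l\i)^{d_\l}$ where $d_\l$ is the multiplicity of $(X-\l)$ in $\chi_{\A}$ (a consequence of the primary/kernel decomposition), and then simply observes that $d_\l\leq d^{\circ}(\chi_{\A})=n$. You instead prove everything from scratch: the kernel-chain stabilization lemma for $N=\A-\l\i$ gives a first stabilization index $e\leq n$ by dimension counting, which already yields the ``in particular'' clause $E_\l(\A)=Ker(N^n)$; then the observation that $N$ restricted to $V=E_\l(\A)$ is nilpotent of index exactly $e$, combined with ``minimal polynomial divides characteristic polynomial'' and the block-triangular factorization $\chi_{\A|_V}\mid\chi_{\A}$, gives the chain $e\leq\dim V\leq d_\l\leq n$ and hence $Ker(N^{d_\l})=E_\l(\A)$ without ever invoking the kernel decomposition theorem. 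What each approach buys: the paper's argument is short but conditional on the unproved standard fact it quotes; yours is longer but self-contained, actually establishes that quoted fact along the way (indeed the sharper bound $e\leq\dim E_\l(\A)$), and explicitly treats the degenerate case $\l\notin Spec(\A)$, which the paper passes over. Both arguments are sound.
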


\begin{proof}
 We just said that one can choose $d_{\l}$ to be such that $(X-\l)^{d_{\l}} \backslash \chi_{\A}$.
 Hence, $d_{\l}\leq d^{\circ}(\chi_{\A})=n$ (with $d^{\circ}$ beeing the standart notation for polynomial degree.).
\end{proof}


We will also need the following lemma.




\begin{LM}\label{linearforms}
Let $E^*$ be the space $Hom_\R(E,\R)$, where $E$ is a finite dimensional vector space, and $f_0,\dots, f_m$ be linear forms in $E^*$. 
Then this family spans $E^*$ 
if and only if $\cap_{i=0}^m Ker(f_i)=\{0\}$.\qed
\end{LM}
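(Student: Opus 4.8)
The plan is to pass through the standard annihilator dimension count. Write $V=\mathrm{span}(f_0,\dots,f_m)\subseteq E^*$ and set $V^{\circ}=\{x\in E\ |\ f(x)=0\text{ for all }f\in V\}$. Observe immediately that $V^{\circ}=\cap_{i=0}^m Ker(f_i)$, since a vector annihilated by each $f_i$ is annihilated by every linear combination of the $f_i$, and conversely. The entire statement then reduces to the single equivalence: $V=E^*$ if and only if $V^{\circ}=\{0\}$.

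The key step is to prove the dimension identity $\dim V+\dim V^{\circ}=\dim E$. I would pick a basis $g_1,\dots,g_k$ of $V$, so $k=\dim V$, and consider the linear map $\Phi:E\to\R^k$ defined by $\Phi(x)=(g_1(x),\dots,g_k(x))$. Its kernel is exactly $V^{\circ}$. The one point needing genuine argument is that $\Phi$ is surjective: if not, $\mathrm{Im}(\Phi)$ would lie in a hyperplane of $\R^k$, so there would exist a nonzero tuple $(a_1,\dots,a_k)\in\R^k$ with $\sum_{j=1}^k a_j g_j(x)=0$ for all $x\in E$, i.e. $\sum_{j=1}^k a_j g_j=0$ in $E^*$, contradicting the linear independence of the $g_j$. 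Hence rank--nullity gives $\dim E=k+\dim V^{\circ}$.

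To conclude, recall that $\dim E^*=\dim E$ for finite dimensional $E$. The identity above then shows $\dim V=\dim E^*$ precisely when $\dim V^{\circ}=0$, that is, when $V^{\circ}=\{0\}$; and since $V$ is a subspace of $E^*$, the equality $V=E^*$ is equivalent to $\dim V=\dim E^*$. Combining these with the identification $V^{\circ}=\cap_{i=0}^m Ker(f_i)$ yields both implications of the lemma simultaneously. One should also check the degenerate case where all $f_i$ are zero: then $V=\{0\}$ and $V^{\circ}=E$, and the stated equivalence holds, consistently with the dimension formula.

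I do not expect a serious obstacle here: the only substantive ingredient is the surjectivity of $\Phi$ (equivalently, the annihilator dimension formula $\dim V+\dim V^{\circ}=\dim E$), which must be derived from linear independence of the basis $g_j$ rather than quoted as folklore; the remaining manipulations are routine bookkeeping with subspaces of $E^*$.
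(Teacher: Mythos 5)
Your proof is correct. It differs in route from the paper's: the paper handles the two implications separately --- for the forward direction it takes $x$ in $\cap_{i=0}^m Ker(f_i)$, notes that $x$ is then killed by every element of $E^*$, and reads off $x=0$ from a dual basis; for the converse it extracts a maximal linearly independent subfamily $g_1,\dots,g_r$ of the $f_i$ and argues by codimension counting (an intersection of $r$ codimension-one kernels has codimension at most $r$, so $r<n$ would leave a nonzero vector in the intersection), concluding $r=n$ and hence that the family spans $E^*$. You instead prove the single annihilator identity $\dim V+\dim V^{\circ}=\dim E$ for $V=\mathrm{span}(f_0,\dots,f_m)$, via rank--nullity applied to the evaluation map $\Phi(x)=(g_1(x),\dots,g_k(x))$, with surjectivity of $\Phi$ extracted from linear independence of the $g_j$; both implications then fall out of one dimension count. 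Your version is more unified and quantitatively sharper (it gives the exact dimension of $\cap_i Ker(f_i)$, not just when it vanishes), and it proves, rather than quotes, the codimension-type fact that the paper invokes without justification; the paper's version is shorter in the easy direction because the dual-basis argument there needs no dimension bookkeeping at all. Either argument suffices for how the lemma is used later (in the proof of Theorem \ref{casgen}, only the case $\cap_i Ker(f_i)=\{0\}$ is needed).
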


\begin{proof}[Proof of Lemma \ref{linearforms}]
In the following we use the notation $Vect(v_1,...,v_u)$ to describe the vector space spaned by the elements $v_1,...,v_u$.
Suppose that $f_0,\dots, f_m$ spans $E^*$.
If $x$ belongs to $\cap_{i=0}^m Ker(f_i)$, then $x$ belongs to the kernel of any element of $E^*$. 
But then, if $B=(e_1,\dots,e_n)$ is a basis of $E$, and $B^*=(e_1^*,\dots,e_n^*)$ is its dual basis, we have $x=x_1.e_1+\dots+x_n.e_n$, and $e_i^*(x)=x_i=0$.
Hence, $x=0$. 
Conversely, if $\cap_{i=0}^m Ker(f_i)=\{0\}$, let $g_1,\dots,g_r$ be a maximal linearly independent family 
in $f_0,\dots, f_m$. 
Hence, $Vect(g_1,\dots,g_r)=Vect(f_0,\dots, f_m)$.  
We thus have $r\leq n$ because $dim(E^*)=dim(E)=n$ and $\cap_{i=1}^r Ker(g_i)=\{0\}$. 
If $r$ was strictly smaller than $n$, then $\cap_{i=1}^r Ker(g_i)$ would be an intersection of $r$ subspaces of co-dimension $1$.
Hence, it would be of co-dimension at most $r$, \emph{i.e.}, $\cap_{i=1}^r Ker(g_i)$ would be of dimension at least $n-r>0$, which is 
a contradiction.
Thus $r=n$, and $(g_1,\dots,g_r)$ is a basis of $E^*$.
It follows that $Vect(f_0,\dots, f_m)=E^*$.
\end{proof}



Before proving Lemma \ref{quotient}, we recall and prove the following standard lemma.

\begin{LM}\label{standard}
Let $\A$ be an endomorphism of a real vector space $E$, and let $\l$ be an eigenvalue of $\A$. There is a supplementary space $E'$ of $E_\l(\A)$, \emph{i.e.}, $E=E_\l(\A)\oplus E'$), and two polynomials $C$ and $D$ in $\R[X]$, 
such that $C(\A)$ is the projection on $E_\l(\A)$ with respect to $E'$, and $D(\A)$ is the projection on $E'$ with respect to $E_\l(\A)$. 
In particular $E'$ is also $\A$-stable, and for any 
$\A$-stable subspace $L$ of $E$, we have $L=L\cap E_\l(\A) \oplus L\cap E'$.  \qed
\end{LM}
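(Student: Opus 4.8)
The plan is to run the classical primary-decomposition argument: factor the characteristic polynomial at $\l$, use Bézout and Cayley--Hamilton to produce complementary idempotents, and then read off everything from two polynomial identities. Concretely, write $\chi_{\A}(X)=(X-\l)^{d_\l}Q(X)$ with $Q(\l)\neq 0$; then $(X-\l)^{d_\l}$ and $Q$ are coprime in $\R[X]$, so there are $U,V\in\R[X]$ with $U(X)(X-\l)^{d_\l}+V(X)Q(X)=1$. I would set $C(X):=V(X)Q(X)$ and $D(X):=U(X)(X-\l)^{d_\l}$, so that $C(\A)+D(\A)=\i$, while $C(\A)D(\A)=U(\A)V(\A)\,\chi_{\A}(\A)=0$ by Cayley--Hamilton. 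From $C(\A)+D(\A)=\i$ and $C(\A)D(\A)=D(\A)C(\A)=0$ it follows formally that $C(\A)^2=C(\A)$ and $D(\A)^2=D(\A)$, i.e.\ $C(\A)$ is the projection onto $\mathrm{Im}(C(\A))$ along $\ker(C(\A))=\mathrm{Im}(D(\A))$, and symmetrically for $D(\A)$.

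Next I would identify the image of $C(\A)$ with $E_\l(\A)$. Since $(\A-\l\i)^{d_\l}$ kills $E_\l(\A)=Ker(\A-\l\i)^{d_\l}$ by Proposition~\ref{maj}, $D(\A)$ vanishes on $E_\l(\A)$, hence $C(\A)$ restricts to the identity there and $E_\l(\A)\subseteq\mathrm{Im}(C(\A))$. Conversely, if $x=C(\A)y$ then $(\A-\l\i)^{d_\l}x=V(\A)Q(\A)(\A-\l\i)^{d_\l}y=V(\A)\chi_{\A}(\A)y=0$, so $\mathrm{Im}(C(\A))\subseteq E_\l(\A)$; thus $\mathrm{Im}(C(\A))=E_\l(\A)$. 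Putting $E':=\ker(C(\A))=\mathrm{Im}(D(\A))$, the idempotent structure gives $E=E_\l(\A)\oplus E'$, with $C(\A)$ the projection onto $E_\l(\A)$ with respect to $E'$ and $D(\A)$ the projection onto $E'$ with respect to $E_\l(\A)$. Because $C(\A)$ and $D(\A)$ are polynomials in $\A$, they commute with $\A$, so their images $E_\l(\A)$ and $E'$ are $\A$-stable.

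For the last assertion, let $L$ be any $\A$-stable subspace. Then $L$ is stable under every polynomial in $\A$, in particular under $C(\A)$ and $D(\A)$. For $x\in L$, the decomposition $x=C(\A)x+D(\A)x$ has $C(\A)x\in L\cap E_\l(\A)$ and $D(\A)x\in L\cap E'$, and this sum is direct since it lies inside $E_\l(\A)\oplus E'$; hence $L=(L\cap E_\l(\A))\oplus(L\cap E')$.

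I do not anticipate a real obstacle: the whole lemma follows formally from the two identities $C(\A)+D(\A)=\i$ and $C(\A)D(\A)=0$. The only points needing a little care are invoking finite-dimensionality of $E$ (so that $\chi_{\A}$, the factorization at $\l$, and Cayley--Hamilton are available — this is the standing hypothesis of the section), and checking the two inclusions that pin $\mathrm{Im}(C(\A))$ down to exactly $E_\l(\A)$; idempotency, commutation with $\A$, and the splitting of $\A$-stable subspaces are then automatic.
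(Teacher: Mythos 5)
Your proof is correct and takes essentially the same route as the paper's: factor $\chi_{\A}(X)=(X-\l)^{d_\l}Q(X)$ at $\l$, use a B\'ezout identity to realize the two complementary projections as polynomials in $\A$, and then split any $\A$-stable subspace $L$ by applying those polynomial projections to its elements. The only difference is that the paper quotes the kernel decomposition lemma to get $E=Ker(\A-\l\i)^{d_\l}\oplus Ker(Q(\A))$ and then adds B\'ezout, whereas you rebuild that decomposition directly from the idempotents $C(\A),D(\A)$ via Cayley--Hamilton, which is the same argument made slightly more self-contained.
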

\begin{proof}
Let $\chi_{\A}=(X-\l)^dQ$, with $Q(\l)\neq 0$. By the kernel decomposition lemma, we have $$E=Ker(\A-\l I_d)^d\oplus Ker(Q(\A)).$$ We set $E'=Ker(Q(\A))$. 
It is thus $\A$-stable. Moreover, by Bezout's identity, there are $P$ and $P'$ in $\R[X]$, such that 
$$P(u)\circ (\A-\l I_d)^d+P'(u)\circ Q(\A)=I_d.$$ 
We set $C=P(X-\l)^d$, and $D=P'(\A)\circ Q(u)$. 
Finally, if $L$ is $\A$-stable, we always have $$L\cap E_\l(\A) \oplus L\cap E'\subset L.$$ Now write an element 
$l$ of $L$ as $l_1+l_2$, with $l_1\in E_\l(\A)$, and 
$l_2\in E'$.
We get $B(\A)(l)=l_1$. 
But $L$ being $\A$-stable, it is also $D(\A)$-stable as well.
Hence, $l_1\in L$. Similarly we have $l_2\in L$, thus 
 $$L=L\cap E_\l(\A) \oplus L\cap E',$$ 
 completing the proof.
\end{proof}

We will use the following result about quotient vector spaces.

\begin{LM}\label{quotient}
 Let $E$ be an $\R$-vector space, let $\A\in End_\R(E)$, and suppose that $L$ is a $\A$-stable subspace of $E$. 
Let $\overline{\A}:E/L\rightarrow E/L$ be the element of $End_\R(E/L)$ defined by $\overline{\A}(x+L)=\overline{\A}(x)+L$.
Then $Spec(\overline{\A})\subset Spec(\A)$. More generally, for any $\l\in Spec(\overline{\A})$, the generalized eigenspace $E_\l(\A)$ maps surjectively to $E_\l(\overline{\A})$ 
in $E/L$. \qed
\end{LM}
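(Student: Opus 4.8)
The plan is to establish the inclusion $Spec(\overline{\A})\subset Spec(\A)$ by a short direct argument, and then to obtain the surjectivity statement by transporting the decomposition of Lemma~\ref{standard} through the quotient map. Write $\pi\colon E\to E/L$ for the canonical projection, so that $\overline{\A}\circ\pi=\pi\circ\A$. For the inclusion I would argue by contraposition: if $\l\notin Spec(\A)$, then $\A-\l\i$ is bijective on $E$, and applying $\pi$ to the identity $(\A-\l\i)(E)=E$ gives $(\overline{\A}-\l\,\mathrm{id}_{E/L})(E/L)=\pi(E)=E/L$; thus $\overline{\A}-\l\,\mathrm{id}_{E/L}$ is a surjective, hence bijective, endomorphism of the finite-dimensional space $E/L$, so $\l\notin Spec(\overline{\A})$. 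This is exactly $Spec(\overline{\A})\subset Spec(\A)$.

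Now fix $\l\in Spec(\overline{\A})$, so $\l\in Spec(\A)$ by the above. Apply Lemma~\ref{standard} to obtain an $\A$-stable complement $E'$ with $E=E_\l(\A)\oplus E'$ and polynomials $C,D$ such that $C(\A)$, $D(\A)$ are the projections onto $E_\l(\A)$ and $E'$ respectively; since $L$ is $\A$-stable, the same lemma yields $L=(L\cap E_\l(\A))\oplus(L\cap E')$. By Proposition~\ref{maj}, applied to both $\A$ and $\overline{\A}$ and using $\dim(E/L)\le n$, we may compute $E_\l(\A)=Ker((\A-\l\i)^n)$ and $E_\l(\overline{\A})=Ker((\overline{\A}-\l\,\mathrm{id}_{E/L})^n)$, while on $E/L$ one has $(\overline{\A}-\l\,\mathrm{id}_{E/L})^n(x+L)=(\A-\l\i)^n(x)+L$. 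The inclusion $\pi(E_\l(\A))\subseteq E_\l(\overline{\A})$ is then immediate, since $x\in E_\l(\A)$ forces $(\A-\l\i)^n(x)=0$, hence $(\overline{\A}-\l\,\mathrm{id}_{E/L})^n(x+L)=0+L$.

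For the reverse inclusion, take $x\in E$ with $x+L\in E_\l(\overline{\A})$, that is $(\A-\l\i)^n(x)\in L$, and decompose $x=x_1+x_2$ with $x_1=C(\A)(x)\in E_\l(\A)$ and $x_2=D(\A)(x)\in E'$. Since $(\A-\l\i)^n(x_1)=0$, we get $(\A-\l\i)^n(x_2)=(\A-\l\i)^n(x)\in L$, and as $E'$ is $\A$-stable, $(\A-\l\i)^n(x_2)\in E'$ too, so $(\A-\l\i)^n(x_2)\in L\cap E'$. The restriction of $\A-\l\i$ to $E'=Ker(Q(\A))$ has no eigenvalue $\l$ (because $Q(\l)\ne0$), hence $\A-\l\i$, and therefore $(\A-\l\i)^n$, is invertible on $E'$; since $L\cap E'$ is $\A$-stable it is stable under the polynomial $(\A-\l\i)^n$, and the restriction of an invertible operator to a stable subspace of a finite-dimensional space is bijective there, so there is $z\in L\cap E'$ with $(\A-\l\i)^n(z)=(\A-\l\i)^n(x_2)$. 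Then $x_2-z\in Ker((\A-\l\i)^n)\cap E'=E_\l(\A)\cap E'=\{0\}$, whence $x_2=z\in L$. Consequently $x+L=x_1+L=\pi(x_1)$ with $x_1\in E_\l(\A)$, which gives $E_\l(\overline{\A})\subseteq\pi(E_\l(\A))$ and finishes the proof.

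The main obstacle is precisely this reverse inclusion: a class $x+L$ annihilated by a power of $\overline{\A}-\l\,\mathrm{id}_{E/L}$ need not be represented by a genuine generalized eigenvector of $\A$, and one must correct the $E'$-component of an arbitrary representative by an element of $L$. What makes this possible is the combination of the invertibility of $\A-\l\i$ on $E'$ with the $\A$-stable splitting $L=(L\cap E_\l(\A))\oplus(L\cap E')$ furnished by Lemma~\ref{standard}; without that splitting one could not guarantee that the correcting element $z$ lies in $L$, and the argument would collapse.
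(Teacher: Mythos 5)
Your proof is correct, but it departs from the paper's argument at two points, so a comparison is in order. For the inclusion $Spec(\overline{\A})\subset Spec(\A)$ the paper chooses a basis adapted to $L$ and a complement, reads off a block upper-triangular matrix $\begin{pmatrix} X & Y \\ 0 & Z\end{pmatrix}$ with $Z=Mat_{\overline{B_2}}(\overline{\A})$, and gets the inclusion from the factorization of the characteristic polynomial; your contrapositive argument (if $\A-\l\i$ is bijective then $\overline{\A}-\l\,\mathrm{id}_{E/L}$ is surjective, hence bijective in finite dimension) is shorter, basis-free, and equally valid. For the surjectivity of $E_\l(\A)\to E_\l(\overline{\A})$ both proofs rest on the same decomposition $E=E_\l(\A)\oplus E'$ from Lemma~\ref{standard} and reduce to showing that the $E'$-component of a representative lies in $L$; the paper does this by exhibiting an explicit polynomial left inverse, writing $x'=P(\A)^k(\A-\l \i)^{kd}x$ and invoking $\A$-stability of $L$, whereas you argue abstractly that $(\A-\l\i)^n$ is injective on $E'$, hence restricts to a bijection of the $\A$-stable subspace $L\cap E'$, so the correcting vector $z$ can be chosen in $L$. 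Your route buys a cleaner conceptual statement at the price of a dimension-counting step; the paper's route keeps everything inside $L$ automatically because the inverse is a polynomial in $\A$. One small remark: your closing comment attributes the success to the splitting $L=(L\cap E_\l(\A))\oplus(L\cap E')$, but your argument never uses that splitting --- it only uses that $L\cap E'$ is $\A$-stable, which is immediate from the stability of $L$ and of $E'$ separately; this does not affect the correctness of the proof.
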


\begin{proof}[Proof of Lemma \ref{quotient}]
Let $B_1$ be a basis for $L$, and $B_2$ be a basis for any supplementary space. Call $\overline{B_2}$ the image of the elements of $B_2$ in 
$\overline{E}=E/L$.
Then $\overline{B_2}$ is a basis of $\overline{E}$. 
With $B=B_1\cup B_2$,
$Mat_B({\A})$ is of the form $$\begin{pmatrix} X & Y \\ 0 & Z \end{pmatrix}.$$ 
Then $X=Mat_{B_1}({\A}_{|L})$,  $Z=Mat_{\overline{B_2}}(\overline{{\A}})$, 
and the second statement follows from this second fact.\\
Now if $\overline{x}$ belongs to $E_\l(\overline{{\A}})$, then 
$(\overline{{\A}}-\l\overline{I_d})^a\overline{x}=\overline{0}$ for some $a\geq 0$. This means that $({\A}-\l I_d)^a x\in L$. 

We write $x=x_\l+x'\in E_\l({\A}) \oplus E'$, for $E'$ as in Lemma \ref{standard}. 
Then  $({\A}-\l I_d)^a x= ({\A}-\l I_d)^a x_\l + ({\A}-\l I_d)^a x'$, with
$ ({\A}-\l I_d)^a x_\l\in E_\l({\A})$, and 
$({\A}-\l I_d)^a x'\in E'$. Let $d$ be the  multiplicity of $\l$ as a root of $\chi_{\A}$.
For $k$ large enough such that $kd\geq a$, 
we have $({\A}-\l I_d)^{kd} x_\l=0$ and 
$({\A}-\l I_d)^{kd} x=({\A}-\l I_d)^{kd} x'$. 
Taking $P\in \R[X]$ as in the proof of Lemma \ref{standard}, we have that 
$P({\A})\circ ({\A}-\l I_d)^{d}$ is the identity when restricted to $E'$.
In particular, this implies that 
$$x'=P({\A})^k({\A}-\l I_d)^{kd} x,$$ and thus $x'\in L$. 
Finally, we obtain $\overline{x}=\overline{x_\l}$, and this concludes the proof, as 
$x_\l \in  E_\l({\A})$. 
\end{proof}



We say that a subset of $\R^n$ is a convex cone if it is convex, and it is also stable under multiplication by elements of $\R_{>0}$. It is obvious that an intersection of convex cones is still a convex cone,
and so one can speak of the convex cone spanned by a subset of $\R^n$.

\begin{prop}\label{casregulier}
Let $C$ be a convex cone of $\R^n$.
Assume that $C$ is non reducible to zero, and is contained in the closed cone 
$$\D=\{x=(x_1, ..., x_n)\in \R^n\,|\, \forall \ i , x_i\geq 0\}.$$ If $A$ is an invertible endomorphism of $\R^n$, with $A(C)\subset C$, then $A$ has a positive eigenvalue. \qed 
\end{prop}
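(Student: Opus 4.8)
The plan is to use a compactness/fixed-point argument to produce an eigenvector of $A$ inside the cone $C$ (or inside its closure), which forces the associated eigenvalue to be positive. First I would intersect $C$ with a suitable affine hyperplane to obtain a compact convex set on which a normalized version of $A$ acts; the natural candidate is the hyperplane $H=\{x\in\R^n\,|\,\langle x,\mathbf{1}\rangle=1\}$, where $\mathbf{1}=(1,\dots,1)$, since $\D$ meets this hyperplane in a compact simplex. Set $K=\overline{C}\cap H$. Because $C$ is not reducible to zero and is contained in $\D$, the closure $\overline{C}$ is a nonzero closed convex cone inside $\D$, so $K$ is a nonempty compact convex set (nonempty precisely because every nonzero vector of $\overline{C}$ has strictly positive coordinate sum, being a nonzero vector with nonnegative coordinates). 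Next I would observe that $A$ maps $\overline{C}$ into $\overline{C}$ (continuity extends $A(C)\subset C$ to the closure) and, being invertible, maps nonzero vectors of $\D$ to nonzero vectors; hence for $x\in K$ the quantity $\langle A x,\mathbf{1}\rangle$ is strictly positive, and the map
$$
T:K\to K,\qquad T(x)=\frac{Ax}{\langle Ax,\mathbf{1}\rangle}
$$
is well-defined and continuous. By Brouwer's fixed-point theorem, $T$ has a fixed point $x_0\in K$, which means $A x_0=\lambda x_0$ with $\lambda=\langle A x_0,\mathbf{1}\rangle>0$. Thus $A$ has a positive eigenvalue, which is exactly the claim.

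The main technical point to be careful about — and what I expect to be the real obstacle — is verifying that $K$ is genuinely nonempty and compact, i.e. that passing to the closure does not destroy the containment in $\D$ or collapse things to zero, and that the normalization denominator never vanishes. The containment $\overline{C}\subseteq\D$ is immediate since $\D$ is closed. Nonemptiness of $K$ requires knowing $\overline{C}\neq\{0\}$; this follows from $C\neq\{0\}$ since the closure of a nonempty set is nonempty and contains $C$. Compactness of $K$ follows because $K$ is a closed subset of the compact simplex $\D\cap H$. Finally, $\langle Ax,\mathbf{1}\rangle>0$ for $x\in K$: indeed $x\neq 0$, so $Ax\neq 0$ by invertibility of $A$, and $Ax\in\overline{C}\subseteq\D$ has all coordinates $\geq 0$, hence at least one coordinate strictly positive, giving $\langle Ax,\mathbf{1}\rangle>0$.

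An alternative to Brouwer, if one prefers a more elementary route, is to apply the classical Perron–Frobenius-type result for cones directly: a continuous linear map sending a nontrivial closed pointed convex cone into itself has an eigenvector in that cone with nonnegative eigenvalue, and invertibility rules out the eigenvalue $0$. But since $\overline{C}$ need not be pointed in general, the cleanest self-contained argument is the explicit Brouwer fixed-point construction above, using that $\overline{C}\subseteq\D$ and $\D$ is itself pointed so the hyperplane section is compact. I would present the Brouwer version, as it keeps all hypotheses ($C$ nonzero, $C\subseteq\D$, $A$ invertible) visibly in play and avoids invoking the full strength of Perron–Frobenius theory.
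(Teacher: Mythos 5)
Your proposal is correct and follows essentially the same route as the paper's proof: normalize by the coordinate-sum functional to get a continuous self-map of a compact convex subset of the simplex $\D\cap H$, apply Brouwer's fixed point theorem, and read off the positive eigenvalue as the (strictly positive) coordinate sum of the image. The only cosmetic difference is that you work with $\overline{C}\cap H$ while the paper takes the closure of $C\cap H_1$; both yield the needed compact convex set, so the arguments coincide in substance.
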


\begin{proof}
Consider $C'=C-\{0\}$.
Then $C'$ is also a convex cone. It is obviously still stable under multiplication by elements of $\R_{>0}$. Moreover, 
if $x$ and $y$ belong to $C'$, then  the vector $tx+(1-t)y$ belongs to $C$ by convexity, for $t\in[0,1]$.
But it cannot be equal to zero, as both $x$ and $y$ have non negative coefficients, this would imply that $x$ or $y$ is null, which is a contradiction.

Now let $H_1$ be the affine hyperplane $H_1=\{x \in \R^n, x_1+\dots x_n=1\}$, and let $f$ be the linear form on $\R^n$ 
defined by $f:x\mapsto x_1+\dots + x_n$, so that $H=f^{-1}(\{1\})$. This linear form is positive on $\D$, and so
we can define the projection $p:\D-\{0\}\rightarrow H$ given by 
$$x\mapsto \frac{1}{f(x)}x.$$ i
It is obviously continuous. We call $C_1$ the set $p(C')$.
We claim that $C_1=C'\cap H_1$ and, in particular, it is convex. 
Indeed, $C_1\subset H_1$ by definition, and $C_1\subset C'$ because $C'$ is stable under $\R_{>0}$. Conversely, the restriction of $p$ to $C'\cap H_1$ is the identity, and so $C_1$ contains $C'\cap H_1=p(C'\cap H_1)$. It is also clearly stable under the continuous map 
$$s=p\circ A:\D-\{0\}\rightarrow H_1,$$ as $A(C')\subset C'$. In particular, its closure $\overline{C_1}$ is stable under $s$ as well. 
It is convex and compact, as a closed subset 
of the compact set $$\{x \in \R^n, \forall \ i, \ x_i\geq 0,\ x_1+\dots +x_n=1\}.$$ 
According to Brouwer's fixed point theorem, this implies that $s$ has a fixed $x$ point in $\overline{C_1}\subset \D-\{0\}$.
But we then have $A(x)=f(x)x$. As $f(x)>0$ for any $x$ in $\D-\{0\}$.
This proves the lemma.
\end{proof}

Finally we will prove the following statement equivalent to Theorem \ref{resprinc}.
We just rewrite the statement of Theorem \ref{resprinc} in terms of morphisms, which are
more convenient to work with.

\begin{thm}\label{casgen}
Let $E$ be an $\R$-vector space of dimension $n$, let $\A$ be a endomorphism of $E$, and let $f$ be a nonzero linear form on $E$. 
If there exists a vector $x\in E$ such that $\f({\A}^k(x))>0$ for all $k\geq 0$, then $\A$ has a positive eigenvalue. \qed
\end{thm}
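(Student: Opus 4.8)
The plan is to reduce the general statement to the situation covered by Proposition~\ref{casregulier}, by passing to a suitable quotient space on which $\A$ acts invertibly and has a convex cone of ``persistently positive'' vectors sitting inside a coordinate octant. First I would observe that the set $C$ of vectors $v \in E$ with $\f(\A^k(v)) \geq 0$ for all $k \geq 0$ is a convex cone (closed under addition and under scaling by $\R_{>0}$), and by hypothesis it is not reduced to $\{0\}$. The obstruction to applying Proposition~\ref{casregulier} directly is twofold: $\A$ need not be invertible, and $C$ need not lie inside an octant $\D$ — indeed $C$ lives in $E$, not $\R^n$ with a preferred basis.

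To handle invertibility, I would split off the generalized eigenspace $E_0(\A)=Ker(\A^n)$ associated with the eigenvalue $0$. By Lemma~\ref{standard} (applied with $\l = 0$) there is an $\A$-stable supplement $E'$ with $E = E_0(\A) \oplus E'$, and $\A$ restricted to $E'$ is invertible; moreover the projection onto $E'$ is of the form $D(\A)$ for a polynomial $D$. One checks that if $x$ is a witness for non-termination then so is its $E'$-component $x' = D(\A)(x)$ (up to discarding finitely many initial iterates, since $\A^k$ kills the $E_0$-part for $k \geq n$ and $\f(\A^k(x)) = \f(\A^k(x'))$ for $k \geq n$). So we may assume $\A$ is invertible; if $E = \{0\}$ after this reduction there is nothing left, and otherwise $\f$ restricted to $E'$ is still the relevant linear form (if it vanished identically on $E'$ we would contradict $\f(\A^k(x')) > 0$).

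Next, to produce an octant, I would use the forward orbit structure under $\A$. Consider the vectors $\f, \f\circ\A, \f\circ\A^2, \dots$ in $E^\star$. Let $W = \bigcap_{k \geq 0} Ker(\f \circ \A^k)$; this is an $\A$-stable subspace, and $\f$ descends to $E/W$. Passing to the quotient $\overline{\A}$ on $E/W$ — using Lemma~\ref{quotient} to know that $Spec(\overline{\A}) \subset Spec(\A)$, so it suffices to find a positive eigenvalue of $\overline{\A}$ — the images $\overline{\f}, \overline{\f}\circ\overline{\A}, \dots$ now have trivial common kernel, hence by Lemma~\ref{linearforms} they span $(E/W)^\star$; picking $n'=\dim(E/W)$ of them that form a basis, the map $\Phi : E/W \to \R^{n'}$ sending $\bar v$ to $(\f(\A^{k_i}(v)))_i$ is a linear isomorphism. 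The image $\Phi(\bar C)$ of the cone is then a convex cone in $\R^{n'}$, it is $A$-stable where $A := \Phi \overline{\A} \Phi^{-1}$, and by construction every coordinate of every vector in $\Phi(C)$ is $\geq 0$, i.e. it sits inside $\D$. It is non-reducible to zero because the orbit of the witness $x$ gives a nonzero element (here one must check $\Phi(\bar x) \neq 0$: this holds since $\f(x) > 0$). Invertibility of $A$ follows from invertibility of $\overline{\A}$, which follows from that of $\A$ by Lemma~\ref{quotient} together with $0 \notin Spec(\A)$.

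At this point Proposition~\ref{casregulier} applies to $(A, \Phi(C), \D)$ and yields a positive eigenvalue of $A$, hence of $\overline{\A}$, hence of $\A$. I expect the main obstacle to be the bookkeeping in the second reduction: verifying that the chosen finite subfamily of $\{\f\circ\A^{k}\}$ genuinely yields an isomorphism onto $\R^{n'}$ with the cone landing in the positive octant, and that the witness vector survives as a nonzero element of $E/W$ with strictly positive first coordinate, so that the cone image is non-trivial. One should also double-check the interaction of the two reductions (quotient by $W$ versus splitting off $E_0$) — it is cleanest to do the $E_0$-splitting first so that $\overline{\A}$ is automatically invertible. The topological content (Brouwer) is entirely encapsulated in Proposition~\ref{casregulier}, so no further compactness argument is needed here.
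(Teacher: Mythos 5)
Your argument is correct, and its core is the same as the paper's: quotient by $W=\bigcap_{k\geq 0}Ker(\f\circ\A^k)$, use Lemma~\ref{linearforms} to turn the forms $\f\circ\A^{k}$ into dual-basis coordinates that place the orbit (or your cone $C$) inside the octant $\D$, invoke Proposition~\ref{casregulier} (Brouwer), and pull the eigenvalue back through Lemma~\ref{quotient}. Where you genuinely differ is in the organization: the paper proves the theorem by induction on $\dim E$, disposing of non-invertibility by restricting to $Im(\A)$ and of the case $W\neq\{0\}$ by inducting on the quotient, and it only applies Proposition~\ref{casregulier} in the terminal case $W=\{0\}$; you instead make two one-shot reductions -- splitting off $E_0(\A)=Ker(\A^n)$ via Lemma~\ref{standard} with $\l=0$ (replacing the witness by $\A^n$ of its $E'$-component so strict positivity holds for all $k\geq 0$, not merely $k\geq n$), and then applying Proposition~\ref{casregulier} directly on $E/W$, using $Spec(\overline{\A})\subset Spec(\A)$ both to keep invertibility and to transfer the positive eigenvalue back. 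The paper's induction keeps each individual step short and lets it re-use the hypotheses wholesale; your non-inductive version makes explicit, in a single pass, exactly where each hypothesis of Proposition~\ref{casregulier} (invertibility, convex cone in $\D$, nontriviality) is secured, at the cost of the bookkeeping you flag. Two cosmetic points: $\Phi(\bar x)\neq 0$ should be justified by the fact that \emph{all} chosen coordinates $\f(\A^{k_i}(x))$ are strictly positive (the index $k=0$ need not be among the selected $k_i$, so citing $\f(x)>0$ alone is not quite the right reason), and Lemma~\ref{linearforms} is stated for finite families, so you should pass to a finite spanning subfamily of $\{\overline{\f}\circ\overline{\A}^k\}$ (which has the same common kernel) before invoking it; neither affects the validity of the proof.
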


\begin{proof}
We prove the result by induction on $n$. 
When $n=1$, we can identify $E$ with $\R$. Then $\A$ is of the form $x\mapsto t_{\A}.x$, for some nonzero $t_{\A}$, and $\{\f> 0\}$ 
is either $\R_{> 0}$, or $\R_{< 0}$. Hence, $x$ belongs to $\R_{> 0}$, or to $\R_{<0}$, and $t_{\A}^k.x$ belongs to the same half-space for every $k\geq 0$.
Hence, $t_{\A}>0$. 

Now if $\A$ is non invertible, we can replace $E$ by the image of $\A$, $Im(\A)$, and $x$ by $\A(x)$, so that the hypothesis are still verified by $\A$'s restriction to $Im(\A)$. But since $Im(\A)$ is a subspace of $E$ of strictly smaller dimension, we get the result 
using the induction hypothesis.
We are thus left with the case when $\A$ is invertible. 
Let $m$ be the maximal non negative integer such that $(\f,\f\circ \A,\dots, \f\circ {\A}^m)$ is a linearly independent family of $E^*$. It is easy to see that $L=\cap_{k\geq 0} Ker(\f\circ {\A}^k)$ is equal to $\cap_{k=0}^m Ker(\f\circ {\A}^k)$.
Hence, it is $\A$-stable. The space $L$ is a proper subspace of $E$ because it is contained in $Ker(\f)$.
Taking the quotient space $\overline{E}=E/L$, the linear map $\A$ induces $\overline{\A}:\overline{E}\rightarrow \overline{E}$, and $\f$ induces a linear form $\overline{\f}$ on $\overline{E}$. By letting $\bar{x}$ be the image of $x$ in $E$, the quadruplet $(\overline{E},\overline{\A},\overline{\f},\bar{x})$ still satisfies the hypothesis of the theorem. 
If $L$ is not zero,  using the induction we conclude that the linear map $\overline{\A}$ has a positive eigenvalue $\l>0$.
But $\l$ is necessarily an eigenvalue of $\A$ by Lemma \ref{quotient}, 
and we are done in this case. 
Finally, assume that $L=\{0\}$.
Then $(e_1^*=\f,e_2^*=\f\circ \A,\dots, e_n^*=\f\circ {\A}^m)$ is a basis of $E^*$. 
In particular $m=n-1$, according to Lemma \ref{linearforms}. 
Take $(e_1,\dots,e_n)$ as its dual basis in $E$, and identify $E$ with $\R^n$, given this basis. Then ${\A}^k(x)$ belongs to the space $\{v\,|\,\forall i, v_i>0\}\subset \D$ 
for all $k\geq 0$.
Hence, the convex cone 
$C$ is spanned by this family as well. It is clearly $\A$-stable, and is not reduced to zero as it contains $x$. We conclude by applying Proposition \ref{casregulier}.
\end{proof}

This also concludes the proof of Theorem \ref{resprinc}, as Theorem \ref{casgen} is an equivalent statement written in terms of the morphisms $A=Mat_B(\A)$) and $f=Mat_B(\f)$. 
Theorem \ref{resprinc} says that the linear program terminates when $A$ has  no positive eigenvalue.
But one cannot conclude on the termination problem using Theorem \ref{resprinc} 
when $A$ has at least one positive eigenvalue. As we already mentioned, Theorem \ref{resprinc} is stated in~\cite{tiwaricav04}.
But the proof given therein contains certain flaws that we now expose. 
\begin{rem}\label{rem1}
The argument of \cite{tiwaricav04} applies the Brouwer's fixed point theorem to a subspace of the projective space $P(\R^n)$, and not $\R^{n-1}$ as stated in \cite{tiwaricav04}. However, this is not an Euclidian space, and so convexity is not well defined in it.
Hence, one cannot apply Brouwer's fixed point theorem to such a set. Moreover, using notation as in the proof of Theorem 1 in \cite{tiwaricav04}, the closure $NT'$ of the set $NT$ can contain zero.
For example as soon as all, real or complex, eigenvalues of $A$ have their module less than $1$. Hence, its image in $P(\R^n)$ is not well defined. The case of $NT'$ containing zero 
raises a serious problem that needs to be treated carefully.
We circumvent it by taking quotients by $L$ in our proof.\qed
\end{rem}

Theorem \ref{resprinc} provides a sufficient condition for the termination of linear programs. In other words, Theorem \ref{resprinc} says that the linear program terminates when there is no positive eigenvalues.
But one can not conclude on the termination problem using Theorem \ref{resprinc} if there exists at least one positive eigenvalue. Intuitively, we could say that Theorem \ref{resprinc} provides us with a decidability result for the termination problem considering the subclass of linear program where the associated assignment matrix $A$ has no positive eigenvalues.
In the following example, we illustrate situations where Theorem \ref{resprinc} applies and when it does not. 

\begin{figure}
\begin{subfigure}[b]{.5\linewidth}
\centering
\begin{lstlisting}
/*...*/
 while(3x - y > 0){
   x := 3x - 2y;
   y := 4/3x - 5/3y;
   }
/*...*/
\end{lstlisting}
\caption{}\label{fig:1a}
\end{subfigure}%
\begin{subfigure}[b]{.5\linewidth}
\centering
\begin{lstlisting}
/*...*/
 while(z > 0){
   x:= x + y;
   z:= -z;
   }
/*...*/
\end{lstlisting}
\caption{}\label{fig:1b}
\end{subfigure}
\caption{Examples of homogeneous linear programs}\label{fig:1}
\end{figure}

\begin{ex}\label{ex-sufficient}
Consider the homogeneous linear program \ref{fig:1a} denoted by $P(A,v)$, and
depicted in Figure \ref{fig:1}. 
The associated matrix $A=\begin{pmatrix} 3 & -2 \\ 4 & -1\end{pmatrix}$ correspond to the \emph{simultaneous updates} representing the \emph{sequential loop assignments}, and the vector $v$ encoding the loop condition, is   $v=(3 , -1)^{\top}$.  The eigenvalues of  $A$ are the complex numbers: $1+2i$ and $1-2i$. As $S$ does not have any positive eigenvalues, we can use Theorem \ref{sufficient} and conclude that  program $P(A,v)$ terminates on all possible inputs.\qed
\end{ex}

\begin{ex}\label{ex-limit-suff}
Now consider the homogeneous linear program \ref{fig:1b} depicted in Figure \ref{fig:1}, denoted by $P(A_1,v_1)$. The associated matrix $A_1$ representing the \emph{simultaneous updates} 
is given by $A_1=\begin{pmatrix} 1 & 1 & 0\\ 0 & 1 & 0\\ 0 & 0 & -1\end{pmatrix}.$
Its eigenvalues are $1$ and $-1$. As $A$ has a positive eigenvalues, one can not determine the termination  of $P(A_1,v_1)$ using Theorem \ref{sufficient}.  
In the following sections we will see how to handle this case in an automated and efficient.\qed  
\end{ex}    

In the next subsection we generalize Theorem \ref{sufficient}, obtaining stronger results.

\subsection{Necessity and Sufficiency for Termination of Linear Programs}\label{newdecidability}


Theorem \ref{rachidiantheory}  provides a necessary and sufficient condition for the termination of programs $P(A,f)\in P^{\H}$ 
$$\textsf{while}\ ( f^{\top}x  >0)\ \{x:=Ax\}.$$ 

\begin{thm}\label{rachidiantheory}
Let $A\in \mathcal{M}_n(\R)$ and let $f\neq 0$ be in $\R^n$. Then program $P(A,f)$
$$\textsf{while}\ ( f^{\top} x  >0)\ \{x:=Ax\}$$ 
terminates if and only if for every positive eigenvalue $\l$ of $A$, the generalized eigenspace $E_\l(A)$ is orthogonal to $f$, \emph{i.e.}, $f^{\top} E_\l(A)=\langle f,E_\l(A)\rangle =0$.\qed
\end{thm}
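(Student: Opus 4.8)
The plan is to prove the two directions separately. For the "if" direction (termination), suppose that $f^{\top} E_\l(A) = 0$ for every positive eigenvalue $\l$ of $A$, and suppose for contradiction that $P(A,f)$ is non-terminating on some input $x$, i.e. $\langle A^k x, f\rangle > 0$ for all $k \geq 0$. The idea is to decompose the ambient space. Let $V = \bigoplus_{\l > 0} E_\l(A)$ be the sum of all generalized eigenspaces for positive real eigenvalues; by the kernel decomposition lemma (as in Lemma \ref{standard}) there is an $A$-stable complement $W$ with $\R^n = V \oplus W$, and moreover the projection onto $W$ along $V$ is a polynomial in $A$. Write $x = x_V + x_W$. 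Since $f$ vanishes on $V$ and $V$ is $A$-stable, $\langle A^k x, f\rangle = \langle A^k x_W, f\rangle$ for all $k$. Now $A|_W$ has no positive real eigenvalue, so by Theorem \ref{resprinc} (applied to the program $P(A|_W, f|_W)$ on the space $W$) the program on $W$ terminates on every input; in particular it terminates on $x_W$, contradicting $\langle A^k x_W, f\rangle > 0$ for all $k$. Hence $P(A,f)$ terminates.

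For the "only if" direction (contrapositive), suppose there is a positive eigenvalue $\l$ of $A$ with $f^{\top} E_\l(A) \neq 0$; I must exhibit a non-terminating input. Restrict attention to the $A$-stable subspace $E_\l(A)$, on which $f$ does not vanish identically. On $E_\l(A)$ the operator $A$ acts as $\l \i + N$ with $N$ nilpotent. The plan is to build an input of the form $x = v + \text{(lower-order corrections)}$ where $v$ is a genuine eigenvector ($Nv = 0$, $Av = \l v$) chosen so that $\langle v, f\rangle \neq 0$ — such a $v$ exists because $\mathrm{Ker}(N)$ is exactly the ordinary eigenspace and $f$ restricted to $E_\l(A)$ is nonzero, so one can find an eigenvector not orthogonal to $f$ (if every eigenvector were orthogonal to $f$, a short argument using the $N$-stable flag structure shows $f$ would vanish on all of $E_\l(A)$). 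Then $A^k v = \l^k v$, so $\langle A^k v, f\rangle = \l^k \langle v, f\rangle$. Replacing $v$ by $-v$ if necessary we may assume $\langle v, f\rangle > 0$, and then $\langle A^k v, f\rangle = \l^k \langle v, f\rangle > 0$ for all $k \geq 0$ since $\l > 0$. Thus $v$ witnesses non-termination.

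The main obstacle I anticipate is the claim in the "only if" direction that $f|_{E_\l(A)} \neq 0$ forces the existence of an \emph{ordinary} eigenvector $v \in \mathrm{Ker}(A - \l\i)$ with $\langle v, f\rangle \neq 0$. This needs care: it is not automatic that a nonzero functional on $E_\l(A)$ is nonzero on the (possibly much smaller) eigenspace. The resolution is to use the increasing filtration $\{0\} \subsetneq \mathrm{Ker}(N) \subsetneq \mathrm{Ker}(N^2) \subsetneq \cdots \subsetneq \mathrm{Ker}(N^e) = E_\l(A)$ where $N = (A - \l\i)|_{E_\l(A)}$: if $f$ vanished on $\mathrm{Ker}(N)$, one argues inductively that $f$ vanishes on each $\mathrm{Ker}(N^{j})$ — indeed for $w \in \mathrm{Ker}(N^{j+1})$ one has $Nw \in \mathrm{Ker}(N^{j})$, but this alone does not immediately give $\langle w, f \rangle = 0$, so instead one should pass to the quotient or use that $\langle A^k w, f\rangle = \langle (\l\i + N)^k w, f\rangle$ is, for each $k$, a polynomial in $k$ with leading term $\l^k \langle w, f\rangle$ (plus lower-order terms in $k$ weighted by $\langle N^i w, f\rangle$); non-termination on $w$ then forces $\langle w, f\rangle \geq 0$ by examining the dominant asymptotics, and a symmetric argument with $-w$ forces equality only if we can separately kill it — cleanest is to simply take any $w \in E_\l(A)$ with $\langle w, f \rangle \neq 0$, note $\langle A^k w, f\rangle = \sum_{i} \binom{k}{i}\l^{k-i}\langle N^i w, f\rangle$ grows like $\l^{k-i_0}\binom{k}{i_0}(\pm)$ where $i_0$ is minimal with $\langle N^{i_0} w, f\rangle \neq 0$, and after possibly replacing $w$ by $-w$ this is eventually positive; this eventual positivity is enough for non-termination (replace $w$ by $A^{k_0} w$ to make it positive from step $0$). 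I expect this asymptotic-positivity argument to be the technical heart of the necessity direction.
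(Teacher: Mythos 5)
Your proposal is essentially correct, but it follows a genuinely different route from the paper in both directions. For the ``if'' direction, you split $\R^n = V \oplus W$ with $V = \bigoplus_{\l>0}E_\l(A)$ and an $A$-stable complement $W$ on which $A$ has no positive real eigenvalue, and then invoke Theorem \ref{resprinc} once on $W$; the paper instead argues by induction on the dimension, repeatedly quotienting by a single generalized eigenspace $E_\mu(\A)\subset Ker(\f)$ and using Lemma \ref{quotient} to pull eigenvalues and generalized eigenspaces back from $E/L$ to $E$. Your global decomposition avoids the induction and the quotient machinery at the cost of the (standard) kernel-decomposition fact that such a $W$ exists; just note that if $f$ vanishes identically on $W$ termination is immediate, since Theorem \ref{resprinc} is stated for a nonzero form. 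For the ``only if'' direction, the paper picks the minimal $r$ with $Ker(\A-\l\i)^r\not\subset Ker(\f)$, chooses $x$ there with $\f(x)>0$, and quotients by $L=Ker(\A-\l\i)^{r-1}\subset Ker(\f)$ so that $x$ becomes an honest eigenvector modulo $L$ and $\f(\A^kx)=\l^k\f(x)>0$ exactly; you instead expand $\langle A^k w,f\rangle=\sum_i\binom{k}{i}\l^{k-i}\langle N^iw,f\rangle$ for an arbitrary $w\in E_\l(A)$ with $\langle w,f\rangle\neq 0$, get eventual positivity after possibly negating $w$, and shift the starting point to $A^{k_0}w$. This asymptotic argument is sound, but one detail needs correcting: the dominant term is governed by the \emph{largest} index $i$ with $\langle N^iw,f\rangle\neq 0$ (since $\binom{k}{i}\l^{k-i}\sim k^i\l^k/(i!\,\l^i)$), not the minimal one as you wrote; with that fix the sign of the top nonzero coefficient controls the asymptotics and your replacement $w\mapsto -w$, then $w\mapsto A^{k_0}w$, completes the construction. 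You were also right to abandon your first idea of finding an ordinary eigenvector $v$ with $\langle v,f\rangle\neq 0$: that can fail (e.g. a single Jordan block with $f$ orthogonal to the eigenline but not to the whole block), which is precisely why both your final argument and the paper's quotient construction work with the full generalized eigenspace rather than the eigenspace.
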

In order to prove Theorem \ref{rachidiantheory} we first restate it  in equivalent linear algebraic terms.

\begin{thm}\label{endo-thm}
Let $E$ be an $\R$-vector space of finite dimension $n$, let $\A$ be an endomorphism of $E$, and let $\f$ be a nonzero linear form on $E$. 
Then there exists a vector $x\in E$ with $\f({\A}^k(x))>0$ for all $k\geq 0$ if and only if there is $\l>0$ in $Spec(\A)$ 
such that $E_\l(\A)\not\subset Ker(\f)$.\qed
\end{thm}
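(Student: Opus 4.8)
\textbf{Proof plan for Theorem \ref{endo-thm}.}

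The plan is to prove the two implications separately. The easy direction is "if there is no such $\l$, then the program terminates": assume that for every positive eigenvalue $\l$ of $\A$ we have $E_\l(\A)\subset Ker(\f)$, and suppose toward a contradiction that some $x$ satisfies $\f(\A^k(x))>0$ for all $k\geq 0$. By Theorem \ref{casgen} (equivalently Theorem \ref{resprinc}), $\A$ has at least one positive eigenvalue, so the set of positive eigenvalues is nonempty. I would then decompose $E$ using the generalized eigenspace structure of $\A$: write $E = W \oplus W'$ where $W$ is the sum of the generalized eigenspaces $E_\l(\A)$ over all positive real $\l$, and $W'$ is a complementary $\A$-stable subspace (the sum of all the other generalized eigenspaces, real and complex). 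Both summands are $\A$-stable, and by hypothesis $W \subset Ker(\f)$. Writing $x = w + w'$ accordingly, we get $\f(\A^k(x)) = \f(\A^k(w')) $ for all $k$, since $\A^k(w)\in W\subset Ker(\f)$. Thus $w'$ is a non-terminating input for the restriction $\A|_{W'}$, which is an endomorphism of the strictly smaller space $W'$ (strictly smaller because $W\neq\{0\}$). By Theorem \ref{casgen} applied to $(W',\A|_{W'},\f|_{W'})$, the map $\A|_{W'}$ has a positive eigenvalue; but its spectrum is contained in $Spec(\A)\setminus\R_{>0}$ by construction, a contradiction. Hence no non-terminating input exists.

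For the converse, suppose there is a positive eigenvalue $\l$ with $E_\l(\A)\not\subset Ker(\f)$; I must exhibit a non-terminating input. The idea is to find $x\in E_\l(\A)$ with $\f(\A^k(x))>0$ for all $k$. Restricting attention to the $\A$-stable space $E_\l(\A)$, on which $\A - \l\i$ is nilpotent, write $\A = \l\i + N$ with $N$ nilpotent of index $e$. Then $\A^k = \sum_{j=0}^{e-1}\binom{k}{j}\l^{k-j}N^j$, so for any $x\in E_\l(\A)$,
$$\f(\A^k(x)) = \l^k\sum_{j=0}^{e-1}\binom{k}{j}\l^{-j}\,\f(N^j x).$$
Let $r$ be the largest index with $\f(N^r x)\neq 0$ for a suitable choice of $x$ (such $x$ exists precisely because $\f$ does not vanish on all of $E_\l(\A)$; take $r$ minimal such that $\f\circ N^r$ is nonzero on $E_\l(\A)$ and pick $x$ with $\f(N^r x)>0$, replacing $x$ by $-x$ if needed). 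Then, factoring out $\l^k\binom{k}{r}\l^{-r}$, the sum is dominated for large $k$ by the $j=r$ term (the terms with $j<r$ vanish by choice of $r$; those with $j>r$ are lower order in $k$ since $\binom{k}{j}/\binom{k}{r}\to 0$ is false — rather $\binom{k}{r}$ dominates when we have only finitely many terms and we divide appropriately; the point is $\binom{k}{j}\sim k^j/j!$, so the top surviving power of $k$ is $k^r$), so $\f(\A^k(x)) = \l^k\big(\tfrac{k^r}{r!}\l^{-r}\f(N^r x) + O(k^{r-1})\big) > 0$ for all large $k$, say $k\geq k_0$. This does not yet give positivity for all $k\geq 0$, only eventually. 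To fix the finitely many small indices, replace $x$ by $y = \A^{k_0}(x)$: then $\f(\A^k(y)) = \f(\A^{k+k_0}(x))>0$ for all $k\geq 0$, and $y\in E_\l(\A)$ since the space is $\A$-stable. This $y$ is the desired non-terminating input.

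The main obstacle I anticipate is the asymptotic analysis in the converse: one must choose the right vector $x$ in $E_\l(\A)$ (namely, pushed as far as possible into the nilpotent filtration so that the leading coefficient $\f(N^r x)$ is nonzero and can be made positive), justify carefully that among the finitely many binomial terms the $j=r$ term genuinely dominates as $k\to\infty$ — which is where the sign $\l>0$ is essential, since it makes the common factor $\l^k$ positive — and then handle the small-$k$ indices by the shift trick $y=\A^{k_0}(x)$. The easy direction's only subtlety is ensuring the complementary space $W'$ is $\A$-stable and has spectrum disjoint from $\R_{>0}$, which is immediate from the primary decomposition of $E$ into generalized eigenspaces (grouping the real factors $(X-\l)^{d_\l}$ of $\chi_{\A}$ with positive $\l$ into one block and everything else, including irreducible quadratic factors, into the other).
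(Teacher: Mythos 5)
Your ``termination'' direction (if every positive eigenvalue $\l$ has $E_\l(\A)\subset Ker(\f)$, then no non-terminating input exists) is correct, and it takes a genuinely different route from the paper. The paper proves this implication by induction on $\dim E$, quotienting by $L=E_\mu(\A)$ for a positive eigenvalue $\mu$ supplied by Theorem \ref{casgen} and pulling eigenvalues and generalized eigenspaces back through Lemma \ref{quotient}. You instead invoke Theorem \ref{casgen} once: split $E=W\oplus W'$ by the real primary decomposition of $\chi_{\A}$ (the factors $(X-\l)^{d_\l}$ with $\l>0$ giving $W$, everything else, including irreducible quadratic factors, giving $W'$), note $W\subset Ker(\f)$ by hypothesis so $\f(\A^k(x))=\f(\A^k(w'))$, and apply Theorem \ref{casgen} to $(W',\A|_{W'},\f|_{W'})$, whose real spectrum contains no positive number. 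This is sound (note $\f|_{W'}\neq 0$ is automatic since $\f(w')=\f(x)>0$, so the nonzero-form hypothesis of Theorem \ref{casgen} holds), and it buys a shorter, induction-free argument at the price of invoking the full primary decomposition rather than the quotient machinery the paper reuses elsewhere.

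In the converse direction there is a concrete slip in your choice of $r$. You prescribe ``take $r$ minimal such that $\f\circ N^r$ is nonzero on $E_\l(\A)$''; but $E_\l(\A)\not\subset Ker(\f)$ means $\f\circ N^0=\f$ is already nonzero there, so this minimal $r$ is $0$, and then your dominance claims fail as stated: there are no terms with $j<r$ to ``vanish'', and the terms with $j>r$ are not lower order --- $\binom{k}{j}\sim k^j/j!$ grows faster for larger $j$, and nothing controls their signs, so the bracket can become eventually negative. The fix is the maximal choice, which your ``obstacle'' paragraph in fact gestures at: fix $x\in E_\l(\A)\setminus Ker(\f)$ and let $r$ be the largest $j$ with $\f(N^j x)\neq 0$ (well defined since $j=0$ qualifies and $N$ is nilpotent); then the $j>r$ terms vanish, the $j<r$ terms are $O(k^{r-1})$, and after replacing $x$ by $-x$ if needed (which does not change $r$) the leading coefficient $\f(N^r x)$ is positive, so your asymptotic formula holds and the shift $y=\A^{k_0}(x)$ yields a non-terminating input. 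With that correction your argument works, and it still differs from the paper's, which avoids asymptotics and the shift entirely: it chooses $r$ with $Ker(\A-\l\,\i)^{r-1}\subset Ker(\f)$ but $Ker(\A-\l\,\i)^{r}\not\subset Ker(\f)$ and $x$ in the difference with $\f(x)>0$; then $\f(N^j x)=0$ for all $j\geq 1$, so $\f(\A^k(x))=\l^k\f(x)>0$ for every $k\geq 0$ exactly (phrased in the paper as: $\overline{x}$ is a genuine eigenvector in the quotient by $L=Ker(\A-\l\,\i)^{r-1}$).
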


\begin{proof}
 First suppose that there is a $\l>0$ in  $Spec(\A)$ with $E_\l(\A)\not\subset Ker(\f)$. Then there is some $r\geq 1$ such that 
$Ker(\A-\l \i)^{r-1}\subset Ker(\f)$.
But we also have  
$Ker(\A-\l \i)^{r}\not\subset Ker(\f)$. 
Let  $x$ be an element of $Ker(\A-\l \i)^{r}-Ker(\f)$  such that $\f(x)>0$.
This is always possible because $Ker(\A-\l \i)^{r}-Ker(\f)$ is stable under $y\mapsto -y$. Because $x\in Ker(\A-\l \i)^r$, it is clear that 
$\A(x )-\l x\in Ker(\A-\l \i)^{r-1}$. 
Let $L$ be $Ker(\A-\l \i)^{r-1}$, and let $\overline{E}=E/L$. As $L$ is $\A$-stable,  $\overline{\A}$ is well defined, and $\overline{\A}(\overline{x})=\l\overline{x}$ because 
$\A(x)-\l x\in L$. Moreover, $L\subset Ker(\f)$.
Hence, $\overline{\f}$ is well defined and $\overline{\f}(\overline{\A}^k(\overline{x}))=\f({\A}^k(x))$ for every $k\geq 0$. As $\overline{\A}^k(\overline{x})=\l^k \overline{x}$, we deduce that $\f({\A}^k(x))=\l^kf(x)>0$ for all $k\geq 0$.\\
Conversely, suppose that there exists a vector $x\in E$, such that $\f({\A}^k(x))>0$ for all $k\geq 0$.
We prove by induction on $n$ that $\A$ has an eigenvalue $\l>0$ such that $E_\l(\A)$ is not contained in $Ker(\f)$. 
If $n=1$, then $\A: t \mapsto \l t$ for $\l\in \R$, and so, $\l^k(\f(x))>0$ for all $k\geq 0$.
This implies $\l>0$, and so $E_\l(\A)=E$ is not be contained in $Ker(\f)$. 
If $n>1$, according to Theorem \ref{casgen} we know that $\A$ admits a positive eigenvalue $\mu$. 
If $E_\mu(\A)$ is not a subset of $Ker(\f)$ we are done. If $L=E_\mu(\A)\subset Ker(\f)$, we consider $\overline{E}=E/L$. This vector space is of dimension less than $n$ and so
$\overline{\f}(\overline{\A}^k(\overline{x}))=\f({\A}^k(x))>0$ for all $k\geq 0$. 
By the induction hypothesis, there is some $\l>0$ in $Spec(\overline{\A})$ such that $E_\l(\overline{\A})\not\subset Ker(\overline{\f})$. But $\l$ belongs to $Spec(\A)$ according to Lemma \ref{quotient}, and $E_\l(\A)$ maps surjectively on $E_\l(\overline{\A})$ according to this same Lemma. 
In particular, we have $\overline{\f}(E_\l(\overline{\A}))=\f(E_\l(\A))$, but the 
left hand side is not reduced to zero in this equality.
Hence, 
$\f(E_\l(\A))\neq \{0\}$, \emph{i.e.}, $E_\l(\A)\not\subset Ker(\f)$, concluding the proof. 
\end{proof}

This argument proves  Theorem \ref{rachidiantheory} as it is a direct corollary of Theorem \ref{endo-thm} with $A=Mat_B(\A)$ and $f=Mat_B(\f)$. 
Theorem \ref{rachidiantheory} gives a necessary and sufficient condition that we can use as the foundation to build a complete procedure for checking termination.  In order to determine termination, we have to check, for each positive eigenvalue, if the vector $f$, encoding the loop condition, is orthogonal to the associated generalized eigenspace. In other words we want to verify if $f$ is orthogonal to the nullspace $Ker((A-\l I_n)^n)$. 

\begin{ex}\label{ex-necessary-sufficient}
Consider the program \ref{fig:1b} depicted in Figure \ref{fig:1} that we denoted as $P(A_1,v_1)$. The matrix $A_1$ is given in Example \ref{ex-sufficient}. The vector encoding the loop condition is $v_1=e_3=(0 , 0 , 1)^{\top}$. We recall that $A_1$ has eigenvalues $1$ and $-1$. The generalized eigenspace $E_1(A_1)$ is equal to $Vect (e_1,e_2)$, where $e_1$ and $e_2$ are the first two vectors of the canonical basis of $\R^3$. Hence $E_1(A_1)$ is orthogonal 
to $v_1$. According to Theorem \ref{rachidiantheory}, program $P(A,w)$ terminates.\qed 
\end{ex}
\begin{ex}
Now we change the loop condition of  program \ref{fig:1b}, depicted in Figure \ref{fig:1}, to  $(y>0)$. Then, we obtain the program $P(A_1,v_2)$ with the new considered loop condition encoded as  $v_2=e_2=(0 , 1 , 0)^{\top}$. The eigenvalues of $A_1$ are (still) $1$ and $-1$, and the generalized eigenspace $E_1(A_1)=Vect (e_1,e_2)$. Hence $E_1(A)$ is not orthogonal to $v_2$, because it contains $v_2$. Theorem \ref{rachidiantheory} tells us the program $P(A_1,v_2)$ does not terminate.\qed 
\end{ex}
In both of these examples, we are able to determine the termination or nontermination 
of the corresponding program using Theorem \ref{rachidiantheory}.  On the other hand, 
Theorem \ref{resprinc} does not allow us to conclude 
anything about the termination of these programs, since the assignment matrix $A'$ exhibit at least one positive eigenvalue.  
In order to avoid the computation of  basis for generalized eigenspaces, we first introduce the space $Row\_Space(M)$, and use the next lemma. 
If $M\in \mathcal{M}(m,n,\R)$, then  $Row\_Space(M)$ denotes the vector subspace of $\R^n$ spanned by the row vectors of $M$.

\begin{LM}\label{lm1}
Let $M$ be a matrix in $\mathcal{M}(m,n,\mathbb{K})$. Then every vector in the nullspace of $M$ is orthogonal to every vector in  $Row\_Space(M)$.\qed 
\end{LM}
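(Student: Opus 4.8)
The plan is to prove Lemma \ref{lm1} by a direct computation using the definitions of nullspace, row space, and the canonical scalar product. Let $M = (m_{i,j}) \in \mathcal{M}(m,n,\mathbb{K})$ and denote by $R_1, \dots, R_m$ its row vectors, so that $R_i = (m_{i,1}, \dots, m_{i,n})$. Recall that $Ker(M) = \{v \in \mathbb{K}^n \mid M \cdot v = 0_{\mathbb{K}^m}\}$, and that the $i$-th coordinate of the vector $M \cdot v$ is precisely $\langle R_i, v \rangle = \sum_{j=1}^n m_{i,j} v_j$.

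First I would fix $v \in Ker(M)$. By definition $M \cdot v = 0$, which means that every coordinate of $M \cdot v$ vanishes, i.e. $\langle R_i, v \rangle = 0$ for all $i = 1, \dots, m$. So $v$ is orthogonal to each row vector $R_i$. Next, I would take an arbitrary element $w$ of $Row\_Space(M)$. By definition of the row space, $w$ is a linear combination of the rows, say $w = \sum_{i=1}^m \lambda_i R_i$ with $\lambda_i \in \mathbb{K}$. Then, using bilinearity of the scalar product, $\langle w, v \rangle = \sum_{i=1}^m \lambda_i \langle R_i, v \rangle = \sum_{i=1}^m \lambda_i \cdot 0 = 0$. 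Hence $v$ is orthogonal to $w$, and since $w$ was an arbitrary element of $Row\_Space(M)$, this proves the claim.

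There is no real obstacle here; the only point requiring a word of care is making explicit the identification between the condition $M \cdot v = 0$ and the system of orthogonality relations $\langle R_i, v \rangle = 0$, which is just the definition of matrix-vector multiplication read coordinate by coordinate. If one wishes to be pedantic about the field $\mathbb{K}$ being $\R$ or $\C$, one should note that the statement uses the (bilinear, not Hermitian) canonical scalar product $\langle\,,\,\rangle$ as introduced in the preliminaries, for which the computation above is literally bilinear over $\mathbb{K}$ and requires no conjugation; so the argument goes through uniformly. This lemma will then be used to replace the verification that $f$ is orthogonal to a generalized eigenspace $E_\l(A) = Ker((A - \l I_n)^n)$ by the verification that $f$ lies in the row space of $(A - \l I_n)^n$, avoiding the explicit computation of a basis of the eigenspace.
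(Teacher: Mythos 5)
Your proof is correct and follows essentially the same route as the paper: the paper writes an arbitrary row-space element as $M^{\top}k$ and computes $w^{\top}M^{\top}k=(Mw)^{\top}k=0$, which is just the matrix-notation form of your expansion of $w$ as a linear combination of rows followed by bilinearity. Your remark that the bilinear (non-Hermitian) scalar product makes the argument uniform over $\mathbb{K}$ is a fair point of care, but no substantive difference.
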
     

\begin{proof}
Let $w\in Ker(M)$, and let $v$ be in the column space of $M^{\top}$. We denote by $\{c_1, ..., c_m\}$ the set of column vectors of $M^{\top}$. Then, there exists a vector $k\in \R^m$ such that $v=M^{\top}\cdot k$,
since $v$ is a linear combination of the column vectors of $M^{\top}$. 
Now we have $$<w,v>=w^{\top}\cdot v = w^{\top}\cdot M^{\top} \cdot k = (M\cdot w)^{\top} \cdot k =0,$$
because $w\in Ker(M)$ and $M\cdot w=0$.   
\end{proof}


From Lemma \ref{lm1}, a basis of $Row\_Space(M)$ is a basis of the orthogonals of $Ker(M)$. Thus, for 
the square matrix $A$, a vector $v$ is orthogonal to $Ker((A-\l I_n)^n)$, \emph{i.e.}, $<E_\l(A),v>=0$, if an only if $v\in Row\_Space((A-\l I_n)^n)$. We directly deduce the following corollary.

\begin{cor}\label{cor2}
Let $A\in \mathcal{M}_n(\R)$ and $v\neq 0\in \R^n$. The program $P(A,v)$ terminates if and only if for every positive eigenvalue $\l$ of $A$ $v$ is in the vector space $Row\_Space((A-\l I_d)^n)$.\qed
\end{cor}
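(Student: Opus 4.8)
The goal is to prove Corollary \ref{cor2}, which translates Theorem \ref{rachidiantheory} into a computationally convenient form by replacing the orthogonality condition ``$f \perp E_\l(A)$'' with the membership condition ``$v \in Row\_Space((A-\l I_n)^n)$''. The plan is to combine Theorem \ref{rachidiantheory} with the orthogonality facts established in Lemma \ref{lm1} and Proposition \ref{maj}, reducing the whole statement to an elementary dimension-counting argument about nullspaces and row spaces.

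\medskip
\noindent\textbf{Proof plan.}
First I would invoke Theorem \ref{rachidiantheory}: $P(A,v)$ terminates if and only if, for every positive eigenvalue $\l$ of $A$, one has $\langle v, E_\l(A)\rangle = 0$. By Proposition \ref{maj}, $E_\l(A) = Ker((A-\l I_n)^n)$, so the condition becomes $v \perp Ker((A-\l I_n)^n)$ for each positive eigenvalue $\l$. It therefore suffices to prove the following purely linear-algebraic equivalence: for any square matrix $M \in \mathcal{M}_n(\R)$ and any $v \in \R^n$, we have $v \perp Ker(M)$ if and only if $v \in Row\_Space(M)$; applying this with $M = (A-\l I_n)^n$ finishes the corollary.

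\medskip
For the equivalence itself, the ``if'' direction is exactly Lemma \ref{lm1}: if $v \in Row\_Space(M)$ then $v$ is a linear combination of rows of $M$, each of which is orthogonal to every element of $Ker(M)$, hence $v \perp Ker(M)$. For the ``only if'' direction, I would argue by dimensions. The orthogonal complement $Ker(M)^{\perp}$ has dimension $n - \dim Ker(M)$, which by the rank-nullity theorem equals $\mathrm{rank}(M) = \dim Row\_Space(M)$ (the row rank equals the rank). Since Lemma \ref{lm1} gives the inclusion $Row\_Space(M) \subseteq Ker(M)^{\perp}$, and both spaces have the same finite dimension, they are equal. Hence $v \perp Ker(M)$, i.e. $v \in Ker(M)^{\perp}$, forces $v \in Row\_Space(M)$, as desired. (The text preceding the corollary already sketches exactly this: ``a basis of $Row\_Space(M)$ is a basis of the orthogonals of $Ker(M)$''.)

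\medskip
\noindent\textbf{Main obstacle.}
There is no serious obstacle here; this is a routine chaining of earlier results. The only point requiring a modicum of care is making sure the generalized eigenspace is being identified with the correct power of $A - \l I_n$: one must use Proposition \ref{maj} to justify $E_\l(A) = Ker((A - \l I_n)^n)$ rather than some smaller power $d_\l$, so that the matrix $(A-\l I_n)^n$ appearing in the corollary's statement is literally the one whose row space we test. Beyond that, I would simply note that ``$\langle f, E_\l(A)\rangle = 0$'' and ``$f$ orthogonal to $Ker((A-\l I_n)^n)$'' are the same statement, quote Lemma \ref{lm1} together with the rank equality $\dim Row\_Space(M) = n - \dim Ker(M)$, and conclude.
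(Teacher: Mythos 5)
Your proposal is correct and follows essentially the same route as the paper: the paper's proof likewise chains Theorem \ref{rachidiantheory} with Lemma \ref{lm1}, asserting that a basis of $Row\_Space((A-\l I_n)^n)$ is a basis of the orthogonal complement of $Ker((A-\l I_n)^n)$. The only difference is that you explicitly supply the rank--nullity dimension count needed to upgrade the one inclusion given by Lemma \ref{lm1} to the equality $Row\_Space(M)=Ker(M)^{\perp}$, a step the paper leaves implicit.
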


\begin{proof}
By Lemma \ref{lm1}, the basis of $Row\_Space((A-\l I_d)^n))$ is a basis of the orthogonals of $Ker((A-\l I_d)^n))$. We then apply Theorem \ref{rachidiantheory}.
\end{proof}



\section{Running Example}\label{run-ex}

In practice, we can use Corollary \ref{cor2} to support three fast computational steps,
as illustrated in the following example.

\begin{ex}\label{ex2}\emph{(Running example)}
Consider a program $P(A,v)$ where


$A=\begin{pmatrix} 2 & -1 & 0 & 0\\ -1 & 2 & -1 & 0\\ 0 & -1 & 2 & 1\\ 0 & 0 & 0 & 2\end{pmatrix}$, and $v=\begin{pmatrix} -1\\ -1\\1\\ 1\end{pmatrix}$. 

\medskip
\noindent\emph{\textbf{Step $1$}: We compute the list $e_{\l}$ of positive eigenvalues for $A$. The result is:}
\begin{small}
\begin{verbatim}
[[2 - sqrt(2), sqrt(2) + 2,2], [1, 1, 2]]
\end{verbatim}
\end{small}
Hence, we have three positive eigenvalues, namely,  $\l_{1}=2, \l_{2}=2-\sqrt{2}, \l_{3}=2+\sqrt{2}$, 
with multiplicities $2$, $1$ and $1$, respectively. 

\medskip
\noindent\emph{\textbf{Step $2$}: We compute the matrix $E_{\l}=(A-\l I_n)^n$ for $\l=2+\sqrt{2}$. The result is:} \\
\begin{small}
\begin{verbatim}
(A - (e[i])*Id_m)^d
[         18  16*sqrt(2)          14  -4*sqrt(2)]
[ 16*sqrt(2)          32  16*sqrt(2)         -14]
[         14  16*sqrt(2)          18 -12*sqrt(2)]
[          0           0           0           4]
\end{verbatim}
\end{small}

\noindent\emph{\textbf{Step $3$}: We check if $v \in Row\_Space(E_{\l})$:}\\
Here we use a standard procedure from linear algebra to check if a given vector belongs to a vector-space spanned by a given set of vectors. We compute the unique \emph{reduced row echelon form} of  
matrix $E_{\l}^{\top}$. For  that we run a \emph{Gaussian elimination} on the rows using the \emph{Gauss-Jordan elimination} algorithm. The generated matrix, below on the left, provides us with a linearly independent basis for $Row\_Space(E_{\l})$.  We remove the rows containing only zero entries, 
and we augment the computed basis with the vector $v^{\top}$ by appending it as the last row. 
We obtain the  matrix below on the right.
\begin{multicols}{2}{
\begin{small}
\begin{verbatim}
(E[i]).echelon_form()
[      1       0      -1       0]
[      0       1 sqrt(2)       0]
[      0       0       0       1]
[      0       0       0       0]  
\end{verbatim}
\end{small}
\vfill
\columnbreak
\begin{small}
\begin{verbatim}
block_matrix([[Er[i]], [V.T]])
[      1       0      -1       0]
[      0       1 sqrt(2)       0]
[      0       0       0       1]
[-------------------------------]
[     -1      -1       1       1]
\end{verbatim}
\end{small}
}
\end{multicols}
Finally, we generate its reduced row echelon form obtaining matrix $R\_S_{\l}$:
\begin{small}
\begin{verbatim}
block_matrix([[Er[i]], [V.T]]).echelon_form()
[1 0 0 0]
[0 1 0 0]
[0 0 1 0]
[-------]
[0 0 0 1]
\end{verbatim}
\end{small}
From the Gauss-Jordan elimination properties, it is well-known that $v$ belongs to the space $Row\_Space(E_{\l})$ if and only if $R\_S_{\l}(n,n+1)=0$. 
Here we have $R\_S_{\l}(n,n+1)= 1$, which means that $v$ is not in $Row\_Space(E_{\l})$. Thus, by Corollary \ref{cor2}, we conclude that program $P(A,v)$ is nonterminating. 
\qed
\end{ex}



As we show in Example \ref{ex2}, we avoid the computation of generalized eigenspaces in practice.
Instead, use the exact algorithm associated to Corollary \ref{cor2}. 

\section{A Complete Procedure to Check Termination}\label{algo}

We use the necessary and sufficient conditions provided by Theorem \ref{rachidiantheory} and its related practical Corollary \ref{cor2} to build a sound and complete procedure to check the termination of linear programs. 
Moreover, the method so obtained is based on few computational steps associated with fast numerical algorithms.  

The pseudo code depicted in Algorithm \ref{Algo-1} illustrates the strategy. 
It takes as input the number of variables, the chosen field where the variables are interpreted, the assignment matrix $A$ and the vector $w$ encoding the loop condition.  
We first compute the list of positive eigenvalues (lines $1$ and $2$ in \ref{Algo-1}).
If this list is empty we can then state that the loop is terminating (lines $3$ and $4$). 
Otherwise, we continue the analysis using the nonempty list of positive eigenvalues.  
For each positive eigenvalues $e'[i]$ we  first need to compute the matrix $E_i=(A-e'[i] I_n)^n$ (line $6$).  Using Corollary \ref{cor2}, we know that the loop is terminating if and only if $w$ is in the $Row\_Space$ of $(A-e'[i] I_n)^n$ for every positive eigenvalue $e'[i]$. 
In other words, for each positive eigenvalue, we have to check if $w$ is in the vector space spanned by the basis of the $Row\_Space$ of the associated matrix $E_i$. 
In order to do so, one first needs to consider the linearly independent vectors $\{r_1, ..., r_n\}$ that form a basis of the $Row\_Space$. This basis is obtained from the list of the non-zero row vectors of the computed \emph{reduced row echelon form} of $E_i$ (lines $7$ and $8$).  
The efficient way to check if $w$ is in the vector space spanned by the basis $\{r_1, ..., r_n\}$ comprises the following computational steps:
(i) We build the augmented matrix $E_A$ formed by the row vectors $r_1, ..., r_n$ and $w^{\top}$ (line $9$);    
(ii) We compute the \emph{reduced row echelon form} of matrix $E_A$ (line $8$). 
For that we apply \emph{Gaussian elimination} on the rows. This reduced, canonical form is unique and is  computed exactly by the \emph{Gauss-Jordan elimination} method; (iii) We know that the added vector $w$ is in the vector space spanned by $r_1,..., r_n$ if and only if the bottom right entry of the reduced row echelon matrix $E_R$ is null.  
Thus if $E_R(n,n+1) \neq 0$, we conclude that there exists a positive eigenvalue $e'[i]$ such that $w$ is not in $Row\_Space(A-e'[i]I_n)^n$, which is equivalent to saying that the loop is nonterminating (lines $11$ and line $12$). Otherwise if he have exhausted the list of positive eigenvalues and always found that $w$ is in the $Row\_Space$ of the associated matrix, we conclude that the loop is terminating  (line $13$).

\begin{small}
\begin{algorithm}
{\bf /*Checking the termination for linear homogeneous programs.*/}\;
\KwData{$n$ the number of program variables, $\mathbb{K}$ the field, $P(A,w)\in P^{\H}$ where $A\in\mathcal{M}(n,\mathbb{K})$ and $w\in\mathcal{M}(n,1,\mathbb{K})$}
\KwResult{Determine the Termination/Nontermination}
\Begin{
\nl $\{e[1],...,e[r]\} \longleftarrow$ {\bf eigenvalues(}$A${\bf)}\;
\nl $\{e'[1],...,e'[s]\} \longleftarrow$ {\bf striclty\_positives(}$\{e[1],...,e[r]\}${\bf)}\;
\nl \If(there is no positive eigenvalues.) {$\{e'[1],...,e'[s]\}=\emptyset$} 
    {
   \nl    \Return TERMINANT\;
        }
\nl \For{$i=1$ \KwTo $s$}{       
         \nl $\mathbb{E} \longleftarrow$ $(A-e'[i] I_n)^n$\;
         \nl $\mathbb{E}_{rrf} \longleftarrow$ {\bf echelon\_form(}$\mathbb{E}${\bf )}\;
         \nl $\mathbb{E'}_{rrf} \longleftarrow$ {\bf remove\_zero\_row(}$\mathbb{E}_{rrf}${\bf )}\;
         \nl $\mathbb{E}_{A} \longleftarrow$ {\bf augment\_row(}$\mathbb{E'}_{rrf}$, $w^{\top}${\bf )}\; 
         \nl $\mathbb{E}_{R} \longleftarrow$ {\bf echelon\_form(}$\mathbb{E}_{A}${\bf )}\;
         \nl \If {$\mathbb{E}_{R}(n,n+1)\neq 0$} {\nl    \Return NONTERMINANT\;}
    }
   \nl    \Return TERMINANT\;
}
\caption{  {\bf Termination\_linear\_Loop} $(n,\mathbb{K}, A, w${\bf )}\label{Algo-1}}
\end{algorithm}
\end{small}

The function {\bf echelon\_form} computes the reduced row echelon form by Guass-Jordan elimination, and its time complexity is of order $O(n^3)$. We interpret the variables in a specified field, \emph{i.e.} an extension of $\Q$, chosen according to the discussion  in Section \ref{count}. 
By using efficient mathematical packages, \emph{e.g.} Maple, \textsf{Mathematica}, \textsf{Sage}, \textsf{Lapack} or \textsf{Eispack}, one can obtain the eigenvalues as closed-form algebraic expressions, \emph{i.e.} the solution of an algebraic equation in terms of its coefficients, relying only on addition, subtraction, multiplication, division, and the extraction of roots. 
Also, it is well known that with $n<5$, the eigenvalues computed by the function {\bf eigenvalues} are already exhibited as such algebraic numbers. 
Moreover, the algorithm for eigenvalue computation has a time complexity that is of order $O(n^3)$, and so the overall time complexity of the algorithm {\bf Termination\_linear\_Loop} remains of the same order
of time complexity.  

In Table \ref{tab-experim} we list some experimental results. The column \textbf{Set-i} refers to a set of loops generated randomly. 
The column \textbf{\#Loops} gives the number of loops treated. 
We use the countable subsets described in Section \ref{count}.  
The column \textbf{Dim} refers to the dimension of the initial systems,
\emph{i.e}, the number of variables. 
The column \textbf{\#T} shows the number of programs found to be terminating, and the column  \textbf{\#NT} gives the number of loop programs found to be non-terminating. 
Finally, column \textbf{CPU/s[T]} refers to  cpu time results while checking all the terminating loop programs, and column \textbf{CPU/s[N]} gives the cpu time taken to check nontermination. 
The  column \textbf{CPU/s[total]} gives  cpu time results, in seconds, for deciding about termination for the given set of $500$ loops. We have implemented our prototype using \textbf{Sage} \cite{sage} with
interfaces written in \textbf{Python}.  By so doing, we were able to access several useful mathematical packages. 
As expected, we can see that  more nonterminating programs were found, as they are easier to write.
Note also that it takes much more time to prove termination than to prove nontermination.

\begin{table}[t]
\caption{Experimental results on randomly generated linear loop programs}\label{tab1} %
\begin{center}
\begin{small}
\begin{tabular}{|c|c|c|c|c|c|c|c|c|}
\hline 
 RandSet&  \#Loops& Dim & \#T & \#NT & CPU/s[T] & CPU/s[N] & CPU/s[total] \\ 
\hline Set-1& $500$  & 3 &$152$ & $348$ &10.02 & 8.79& 18.24\\ 
\hline Set-2& $500$  & 3 &$195$&$305$&$8.97$&$9.11$ & 18.08\\ 
\hline Set-3& $500$  & 3 & $233$ & $267$ & $15.07$ & $12,78$ &27.85 \\ 
\hline Set-4& $500$  & 3 & $223$ & $277$ & 12.49 & 10.42 & 22.91\\ 
\hline Set-5& $500$  & 3 & 246&254 &12.52 &11.59 & 24.11\\ 
\hline Set-6& $500$  & 3 &222 &278 &13.30 &10.35 &23.66 \\   
\hline Set-7& $500$  & 4 & 122&378 &27.8 &16.51 &44.31 \\ 
\hline Set-8& $500$  & 4 &184 &316 &42,67 &21.90 &53.80 \\ 
\hline Set-9& $500$  & 4 &145 &355 &31.91 &18.05 & 49.97\\ 
\hline Set-10& $500$  & 4 &171 &329 &41.16 &22.37 &63.54 \\ 
\hline Set-11& $500$  & 4 &185 &315 &43.03 & 24.22&67.25 \\ 
\hline Set-12& $500$  & 4 & 176& 324&40.36 &19.95 &60.32 \\  
\hline Set-13& $500$  & 5 &183 &317 &126.24 & 66.95& 193.20\\ 
\hline Set-14& $500$  & 5 &227 &273 &155.80 &81.29 &237.10 \\ 
\hline Set-15& $500$  & 5 &178 &322 &103.90 &43.47 &146.57 \\ 
\hline Set-16& $500$  & 5 &161 &339 &169.92 &54.00 & 223.92\\ 
\hline Set-17& $500$  & 5 &174 &326 &171.92 & 66.75& 238.68\\ 
\hline Set-18& $500$  & 5 &158 &342 &174.91 &70.32 & 254.24\\ 
\hline Set-19& $500$  & 6 &141 & 359& 236.0&70.19  & 306.20\\ 
\hline Set-20& $500$  & 6 &173 &327 &387.80 &105.69 &493.50 \\ 
\hline Set-21& $500$  & 6 &192 &308 &342.70 &101.89 & 444.59\\ 
\hline Set-22& $500$  & 6 &188 &312 &352.40 &165.41 &517.81 \\ 
\hline Set-23& $500$  & 6 & 227& 273 &402.71 &174.56 &577.28 \\ 
\hline Set-24& $500$  & 6 & 184 & 316 &385.00 &190.94& 575.94 \\ 
\hline Set-25& $500$  & 7 & 171& 329&851.18 &194.21& 1044.39 \\ 
\hline Set-26& $500$  & 7 & 139& 361&699.03 &174.65& 873.68 \\  
\hline Set-27& $500$  & 7 & 166& 334&876.62 &238.94& 1115.56 \\   
\hline 
\end{tabular}
\end{small}
\label{tab-experim}
\end{center}
\end{table}

\section{Variables Over Countable Sets}\label{count}
In this section, we show that to check the termination of a linear program $P(A,v)$ with one loop condition over $\R^n$, we can restrict the analysis 
to the case where the variable belongs to a countable subset of $\R^n$, depending on $A$. First, we study an example, which is already interesting in itself, and which will prove that we cannot restrict the interpretation of the variable over the field $\Q$ of rational numbers if we want to prove the termination for all real inputs. We start with two elements of $\Q(\sqrt{2})-\Q$, which are conjugate under the Galois group $Gal_{\Q}(\Q(\sqrt{2}))$, 
of opposite signs, and the negative one 
of absolute value strictly greater than the positive one. 
For instance, take $\l^-=-1-\sqrt{2}$, and $\l^+=-1+\sqrt{2}$. They are the roots of the polynomial $P(X)=(X-\l^-)(X-\l^+)=X^2+2X-1$.  Now let $A=\begin{pmatrix} 0 & 1 \\ 1 & -2 \end{pmatrix}$ be the associated companion matrix, so that its characteristic polynomial is $P$, and its eigenvalues are $\l^-$ and  $\l^+$. Its generalized eigenspaces are easy to compute.
We find
$E_{\l^-}(A)= \R.e^-$  and $E_{\l^+}(A)= \R.e^+$ with  $e^-=\begin{pmatrix} 1 \\ \l^- \end{pmatrix}$ and $e^+=\begin{pmatrix} 1 \\ \l^+ \end{pmatrix}$.  Now let $v=(1 , 0)^\top$. We have $<v,e^+ >=1$ and so, according to Theorem \ref{rachidiantheory}, the program $P_1=P(A,v)$, associated to $A$ and $v$, does not terminate. We can actually find the points of $\R^2$ for which the program is not terminating.

\begin{prop}\label{locusexample}
Let $A$, $v$ and $P_1$ be as above.
Then program $P_1$ does not terminate for an initial condition $x\in \R^2$ if and only if 
$x\in E_{\l^+}(A)$ and $\langle x,v \rangle >0$, i.e. $x\in \R_{>0}.e^+.$\qed
\end{prop}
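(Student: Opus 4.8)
The plan is to compute explicitly the set of non-terminating initial conditions using the diagonalizability of $A$. First I would note that since $A$ has two distinct real eigenvalues $\l^-$ and $\l^+$, the vectors $e^-$ and $e^+$ form a basis of $\R^2$, and any $x\in\R^2$ can be written uniquely as $x=\a e^- + \b e^+$ with $\a,\b\in\R$. Then $A^k x = \a(\l^-)^k e^- + \b(\l^+)^k e^+$, so that $\langle A^k x, v\rangle = \a(\l^-)^k\langle e^-,v\rangle + \b(\l^+)^k\langle e^+,v\rangle$. Since $\langle e^-,v\rangle = \langle e^+,v\rangle = 1$ (the first coordinates of $e^-$ and $e^+$ are both $1$), this simplifies to $\langle A^k x, v\rangle = \a(\l^-)^k + \b(\l^+)^k$.

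Next I would analyze when this quantity stays strictly positive for all $k\geq 0$. The key numerical fact is that $\l^- = -1-\sqrt2$ and $\l^+ = -1+\sqrt2$ satisfy $\l^- < 0 < \l^+ < 1$ and $|\l^-| = 1+\sqrt2 > 1 > |\l^+|$; in particular $\l^-$ is negative and dominates in absolute value. If $\a\neq 0$, then the term $\a(\l^-)^k$ dominates for large $k$ and alternates in sign (because $\l^-<0$), so $\langle A^k x, v\rangle$ cannot remain positive for all $k$; hence non-termination forces $\a = 0$, i.e. $x\in\R e^+ = E_{\l^+}(A)$. Conversely, if $\a = 0$ then $\langle A^k x, v\rangle = \b(\l^+)^k$, and since $\l^+>0$ this is positive for all $k\geq 0$ if and only if $\b>0$, i.e. $x\in\R_{>0}\,e^+$. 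Combining, $P_1$ fails to terminate on $x$ exactly when $x\in\R_{>0}\,e^+$, which is the same as saying $x\in E_{\l^+}(A)$ and $\langle x,v\rangle>0$ (note $\langle x,v\rangle = \b$ when $x=\b e^+$). This matches the claimed description and is consistent with Theorem~\ref{rachidiantheory}, since $\l^+$ is the unique positive eigenvalue and $E_{\l^+}(A)\not\subset\operatorname{Ker}(v)$.

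I expect the only mildly delicate step to be the rigorous handling of the "dominant term" argument when $\a\neq 0$: one must argue that even if $\b(\l^+)^k$ partially compensates, the alternating sign of $\a(\l^-)^k$ eventually wins. This is handled cleanly by writing $\langle A^{k} x, v\rangle = (\l^-)^k\bigl(\a + \b(\l^+/\l^-)^k\bigr)$ and observing that $|\l^+/\l^-|<1$, so the bracket tends to $\a\neq 0$; thus for $k$ large the bracket has constant sign while $(\l^-)^k$ alternates, giving infinitely many $k$ with $\langle A^k x, v\rangle \leq 0$. No other part of the argument presents any real obstacle.
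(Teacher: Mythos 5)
Your proof is correct and follows essentially the same route as the paper's: decompose $x$ in the eigenbasis $(e^-,e^+)$, use $|\l^-|>|\l^+|$ together with the alternating sign of $(\l^-)^k$ to force the $e^-$-coefficient to vanish, and then conclude that the $e^+$-coefficient must be positive. Your extra remark making the dominance argument explicit via the ratio $(\l^+/\l^-)^k$ is a welcome sharpening of the paper's slightly terser "same sign for $k$ large enough" step, but it is not a different approach.
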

\begin{proof}
 If $x=t.e^+$, with $t>0$, then 
 $A^k(x)=t{\l^+}^k.x$, and $<v,A^k(x)>=t{\l^+}^k>0$ 
for all $k\geq 0$.
Hence, the program does not 
terminate with such an $x$ as initial condition.
Conversely, suppose that $x$ satisfies  $\langle v,A^k(x) \rangle >0$ for all $k\geq 0$. Decompose $x$ on the basis $(e^-,e^+)$. Then 
$x=s.e^-+t.e^+$, and $A^k(x)=s{\l^-}^k.e^- +t{\l^+}^k.e^+$, so that 
$<v,A^k(x)>=s{\l^-}^k +t{\l^+}^k$. Suppose that $s$ is not zero. As $|\l^-|>|\l^+|$, for $k$ large enough, the scalar $<v,A^k(x)>$ will be of the same sign as $s{\l^-}^k$, which  alternates positive and negative. Since this is absurd, $s=0$. Now as  $<v,A^k(x)>=t{\l^+}^k$, this implies that $t>0$, and so the 
proposition holds. 
\end{proof}

Proposition \ref{locusexample} leads us to the following corollary.

\begin{cor}\label{counter}
 With $A$ and $v$ as above,  program $P_1$ is terminating on $\Q^2$, but not on $\R^2$\qed
\end{cor}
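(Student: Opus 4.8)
The claim is that program $P_1 = P(A,v)$, with $A = \begin{pmatrix} 0 & 1 \\ 1 & -2 \end{pmatrix}$ and $v = (1,0)^\top$, is terminating on every rational initial condition $x \in \Q^2$, yet fails to terminate on some real input. The non-termination over $\R^2$ is already in hand: by Proposition \ref{locusexample}, $P_1$ does not terminate precisely on the ray $\R_{>0}\cdot e^+$, where $e^+ = (1, \l^+)^\top$ with $\l^+ = -1 + \sqrt 2$; this ray is nonempty, so $P_1$ is not terminating on $\R^2$. The substance of the corollary is therefore the first half: that $P_1$ \emph{does} terminate on all of $\Q^2$.

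**Main step.** By Proposition \ref{locusexample} again, the full set of non-terminating initial conditions is exactly $\R_{>0}\cdot e^+ = \{ (t, t\l^+)^\top : t > 0\}$. So it suffices to show this ray contains no rational point. Suppose $(t, t\l^+) \in \Q^2$ with $t > 0$. Then $t \in \Q$ and $t \neq 0$, hence $\l^+ = (t\l^+)/t \in \Q$. But $\l^+ = -1 + \sqrt 2 \notin \Q$, since $\sqrt 2$ is irrational — a contradiction. Therefore $\Q^2 \cap (\R_{>0}\cdot e^+) = \emptyset$, so for every $x \in \Q^2$ there is some $k \ge 0$ with $\langle v, A^k(x)\rangle \le 0$, i.e. $P_1$ terminates on $x$. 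Combining the two halves gives the corollary.

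**Remark on the obstacle.** There is essentially no obstacle here: the corollary is an immediate consequence of the explicit description of the non-terminating locus furnished by Proposition \ref{locusexample} together with the irrationality of $\sqrt 2$. The only thing to be slightly careful about is that the non-terminating set is a single open ray through an irrational direction, so it is the \emph{direction} vector $e^+$ that must be checked to be non-proportional to any rational vector — which is exactly the statement $\l^+ \notin \Q$. One could phrase it even more conceptually: the generalized eigenspace $E_{\l^+}(A) = \R\cdot e^+$ is defined over $\Q(\sqrt 2)$ but its only $\Q$-rational point is the origin, because a line through the origin in $\R^2$ with irrational slope meets $\Q^2$ only at $0$; and the origin does not satisfy $\langle x, v\rangle > 0$. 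This gives the same conclusion and foreshadows the general phenomenon studied in Section \ref{count}.
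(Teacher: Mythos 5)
Your proof is correct and follows essentially the same route as the paper's: both deduce from Proposition \ref{locusexample} that the non-terminating locus is exactly the ray $\R_{>0}\cdot e^+$, observe that this ray contains no rational point because the ratio of the coordinates of $e^+$ is the irrational number $\l^+=-1+\sqrt{2}$, and conclude termination on $\Q^2$ while non-termination on $\R^2$ is immediate. Your write-up merely makes the irrationality step slightly more explicit than the paper does.
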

\begin{proof}
 We already saw that $P_1$ does not terminate on $\R^2$. Now let $x$ be an element of $\Q^2$. If $P_1$ was not terminating with $x$ 
as an initial value, this would imply  that $x$ is in $\R_{>0}.e^+$, according to Lemma \ref{locusexample}. However, no element of 
$\Q^2$ belongs to $\R_{>0}.e^+$ because the quotient of the coordinates of $e^+$ is irrational. This implies that $P_1$ terminates on 
$\Q^2$.
\end{proof}

This proves that even if $A$ and $v$ are rational, one cannot guarantee the termination over the reals if the interpretation of the variables are restricted to rationals. It is clear that one cannot hope to produce 
any valid conjecture of this type if $A$ and $v$ have wild coefficients,
like transcendentals, for example. 
However, when $A$ and $v$ have algebraic coefficients, using Corollary \ref{cor2}, one can find a simple remedy. It is indeed enough to replace $\Q$ by a finite extension of the field $\Q$. Such an extension $K$ is called a \textit{number field}, and is known to be countable.
Indeed, it is a $\Q$-vector space of finite dimension, \emph{i.e.}, $K=\Q.k_1\oplus \dots \oplus \Q.k_l$ for some $l\geq 1$, and elements $k_i$ in $K$.
It is, moreover, known that $K$ is the fraction field of its \textit{ring of integers} $O_K$, which is a free $\Z$-module of finite type. 
In fact $O_K=\Z.o_1\oplus \dots \oplus \Z.o_l$ for the same $l\geq 1$, and where 
the elements $o_i$ can be chosen equal to the $k_i$, 
for well chosen $k_i$'s. We say that a number field is \textit{real} if it is a subfield of $\R$.
Notice that in the mathematical literature 
a totally real number field is a number field with only real embeddings in $\C$.
Here what we call real is thus weaker than totally real.  

\begin{thm}\label{testcountable}
 Let $A\in \mathcal{M}_n(\R)$, $v\neq 0\in \R^n$, and suppose that their coefficients are actually in $\Q$ or, more generally, in a \emph{real} 
number field $K$. 
Then there is a well-determined \emph{real} finite extension $L$ of $\Q$, or of $K$ in the general case, which is contained in $\R$ and such that the program $P(A,v)$, 
associated to $A$ and $v$, terminates if and only if it terminates on the countable set $L^n$. We can choose $L$ to be the extension $\Q(\l_1,\dots,\l_t)$ of $\Q$, or $K(\l_1,\dots,\l_t)$ in general, spanned by the 
positive eigenvalues $(\l_1,\dots,\l_t)$ of $A$. It is actually enough 
to check the termination of the program on $O_L^n$.\qed
\end{thm}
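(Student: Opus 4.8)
The goal is to show that the termination of $P(A,v)$ over $\R^n$ is equivalent to its termination over $L^n$ (and even over $O_L^n$), where $L = K(\l_1,\dots,\l_t)$ is the extension spanned by the positive eigenvalues of $A$. One direction is trivial: if $P(A,v)$ terminates on all of $\R^n$, it terminates on the subset $L^n$. The substance is the converse, and the natural route is the contrapositive: \emph{if $P(A,v)$ does not terminate over $\R^n$, then it does not terminate over $O_L^n$}. By Theorem~\ref{rachidiantheory}, non-termination over $\R^n$ is equivalent to the existence of a positive eigenvalue $\l$ of $A$ with $E_\l(A) \not\subset Ker(v^\top)$, i.e. $\langle v, E_\l(A)\rangle \neq \{0\}$. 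So I would fix such a $\l$; note $\l \in L$ by construction.

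\textbf{Main steps.} First, I would observe that the generalized eigenspace $E_\l(A) = Ker((A-\l I_n)^n)$ is the nullspace of a matrix with entries in $K(\l) \subseteq L$, hence it is a subspace \emph{defined over $L$}: it has a basis consisting of vectors in $L^n$ (run Gaussian elimination on $(A-\l I_n)^n$ over the field $L$; the kernel basis vectors it produces lie in $L^n$). Second, since $\langle v, E_\l(A)\rangle \neq \{0\}$ and $v \in K^n \subseteq L^n$, there is some basis vector $w \in E_\l(A)\cap L^n$ with $\langle v, w\rangle \neq 0$; replacing $w$ by $-w$ if necessary we may assume $\langle v,w\rangle > 0$ — and this sign condition is meaningful precisely because $L \subseteq \R$, so that $\langle v,w\rangle$ is a real number. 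Now, rescaling $w$ by a suitable nonzero integer (to clear denominators relative to the $\Z$-basis $o_1,\dots,o_l$ of $O_L$) we obtain $w \in O_L^n$ still satisfying $\langle v,w\rangle > 0$. Third, I would re-run the argument in the proof of Theorem~\ref{endo-thm}: taking $r \ge 1$ minimal with $Ker(A-\l I_n)^r \not\subset Ker(v^\top)$, one picks $x \in Ker(A-\l I_n)^r$ with $\langle v,x\rangle > 0$, passes to the quotient by $L' = Ker(A-\l I_n)^{r-1}$ where $\bar A$ acts on $\bar x$ as multiplication by $\l$, and concludes $\langle v, A^k x\rangle = \l^k \langle v,x\rangle > 0$ for all $k$. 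The only thing to check is that the whole construction can be carried out keeping $x \in O_L^n$: the subspaces $Ker(A-\l I_n)^j$ are all defined over $L$, so a witness $x$ with $\langle v,x\rangle>0$ can be found in $L^n$, and then cleared to $O_L^n$ by an integer multiple — which preserves both membership in $Ker(A-\l I_n)^r$ and the inequality $\langle v,x\rangle>0$.

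\textbf{Wrapping up and the real case.} Putting these together: starting from non-termination over $\R^n$, I produce an explicit $x \in O_L^n$ with $\langle v, A^k x\rangle > 0$ for all $k \ge 0$, which is exactly non-termination over $O_L^n$ (a fortiori over $L^n$). Contraposing gives the theorem. For the "more general" statement with coefficients in a real number field $K$: nothing changes except that one works with $L = K(\l_1,\dots,\l_t)$ and its ring of integers $O_L$ throughout; $L$ is still a finite extension of $\Q$, still real (being generated over the real field $K$ by the \emph{positive}, hence real, eigenvalues $\l_i$), hence countable, and $O_L$ is still a finitely generated free $\Z$-module, so the clearing-of-denominators step goes through verbatim.

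\textbf{Main obstacle.} The one genuinely non-formal point is the claim that $E_\l(A)$, and more generally each $Ker(A-\l I_n)^j$, admits a basis in $L^n$: this is the assertion that kernels of matrices over a field $L$ are "defined over $L$", which is standard (Gaussian elimination is a field-internal operation), but it must be stated carefully because it is exactly what lets the argument descend from $\R$ to the countable set $L^n$. Everything else is a matter of re-invoking Theorem~\ref{rachidiantheory} / Theorem~\ref{endo-thm} and performing the harmless integer rescaling to land in $O_L^n$ rather than merely $L^n$.
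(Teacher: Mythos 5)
Your proposal is correct, and its skeleton matches the paper's: both handle the trivial direction, then argue the contrapositive via Theorem~\ref{rachidiantheory}, fix a positive eigenvalue $\lambda$ (which lies in $L$ by construction) with $\langle v, E_\lambda(A)\rangle\neq 0$, take the minimal $r$ with $Ker(A-\lambda I_n)^r\not\subset v^{\perp}$, reuse the quotient argument from Theorem~\ref{endo-thm} to turn any $x\in Ker(A-\lambda I_n)^r$ with $\langle v,x\rangle>0$ into a non-termination witness, exploit that these kernels are defined over $L$, and finally clear denominators to land in $O_L^n$. Where you genuinely diverge is in how the $L$-rational witness is produced: the paper starts from a real witness $x$ and uses a density argument ($L^n$ intersected with $Ker(A-\lambda I_n)^r - Ker(A-\lambda I_n)^{r-1}$ is dense there because $\Q\subset L$ is dense in $\R$), takes a sequence $x_k\to x$ in $L^n$, and invokes continuity of $\langle v,\cdot\rangle$ to get $\langle v,x_k\rangle>0$ for large $k$; you instead select the witness purely algebraically, noting that since $Ker(A-\lambda I_n)^r$ has a basis in $L^n$ and the functional $\langle v,\cdot\rangle$ does not vanish on it, some $L$-rational basis vector pairs nonzero with $v$, and a sign flip finishes. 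Your route is more elementary (no topology, no approximating sequence) and yields an explicit witness; the paper's density argument is the only step of its proof that is not purely linear-algebraic. One caution: your intermediate step that picks $w\in E_\lambda(A)\cap L^n$ with $\langle v,w\rangle>0$ is not by itself a non-termination witness (a generic generalized eigenvector with positive pairing need not satisfy $\langle v,A^k w\rangle>0$ for all $k$); the selection must be performed inside $Ker(A-\lambda I_n)^r$ for the minimal $r$, exactly as your third step does, so that the quotient by $Ker(A-\lambda I_n)^{r-1}\subset Ker(v^{\top})$ makes $\bar{x}$ a genuine eigenvector and gives $\langle v,A^k x\rangle=\lambda^k\langle v,x\rangle>0$. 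With that reading, the integer rescaling into $O_L^n$ preserves both the kernel membership and the strict inequality, and your proof is complete.
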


\begin{proof} 
We deal with the general case.
The reader not familiar with field extensions can just replace $K$ by $\Q$.
It is obvious that if the program terminates, it terminates on $L^n$ for any subset $L$ of $\R$. Now let $\l_1,\dots,\l_r$ be 
the positive eigenvalues of $A$. They are all roots of the minimal (or characteristic) polynomial $Q$ of $A$, which is in $K[X]$.
Hence they are all algebraic on $K$, and so also on $\Q$ as $K/\Q$ is finite.  
Let $L=K(\l_1,\dots,\l_r)\subset \R$. Suppose that the program $P_1$ does not terminate. 
Then there is some $i\in \{1,\dots,r\}$, such that $<E_{\l_i},v>\neq 0$ according to Corollary \ref{rachidiantheory}. Let $r$ be the positive integer such that 
$Ker((A-\l_i I_n)^r)\not\subset v^{\perp}$, but 
$Ker((A-\l_i I_n)^{r-1}) \subset v^{\perp}$.
As in the proof of Theorem \ref{endo-thm},  for any $x$ 
in $Ker((A-\l_i I_n)^r)-Ker((A-\l_i I_n)^{r-1})$, such that $<v,x>>0$, the program does not terminate. 
We fix such an $x$. 
Since both spaces $Ker((A-\l_i I_n)^r)$ and $Ker((A-\l_i I_n)^{r-1})$ are defined by linear equations with coefficients in $L$,  there is a basis of 
$Ker((A-\l_i I_n)^r)$ with coefficients in $L^n$ containing a basis of $Ker((A-\l_i I_n)^{r-1})$ with coefficients in $L^n$. 
It is easy to see that this  implies that $L^n \cap [Ker((A-\l_i I_n)^r)-Ker((A-\l_i I_n)^{r-1})]$ is dense in 
$Ker((A-\l_i I_n)^r)-Ker((A-\l_i I_n)^{r-1})$, because $L$ contains $\Q$ which is dense in $\R$.  Hence, there is a sequence $x_k$ in $L^n \cap [Ker((A-\l_i I_n)^r)-Ker((A-\l_i I_n)^{r-1})]$ which approaches $x$.
In particular, $\langle v,x_k\rangle>0$ for $k$ large enough. 
Thus the program  does not terminate on $x_k$ when $k$ is such that $\langle v,x_k \rangle >0$. This shows that $P_1$ 
does not terminate on $L^n$. 
The fact that $P_1$ does not terminate on $O_L$ is a trivial consequence of the fact that 
any element of $L$ is the quotient of two elements of $O_l$.
In particular, if $P_1$ does not terminate on $x\in L^n$, take $a>0$ in $O_L$ such that 
$ax\in O_L^n$.
Then the program does not terminate on $ax$.
\end{proof}

We now show how Theorem \ref{testcountable} applies on our previous example.

\begin{ex}
 For the program associated to  matrix $A=\begin{pmatrix} 0 & 1 \\ 1 & -2 \end{pmatrix}$ and  vector 
$v=(1 , 0)^{\top}$, we get $L=\Q(\l^+)=\Q(\sqrt{2})=\{a+b\sqrt{2}:a\in \Q, b\in \Q\}$. 
Its ring of integers is $O_L=\Z(\l^+)=\Z(\sqrt{2})=\{a+b\sqrt{2}:a\in \Z, b\in \Z\}$. Theorem \ref{testcountable} asserts that, as the program $P(A,v)$ is non terminating, it is already non terminating on $O_L^2$. Indeed, 
take $x^+$ as an initial value, then $x^+= \begin{pmatrix} 1 \\ -1+\sqrt{2} \end{pmatrix}$ is in $O_L^2$, and we saw that 
$P(A,v)$ does not terminate on $x^+$.\qed
\end{ex}

\section{Discussion}\label{discus}

The important papers \cite{Braverman, tiwaricav04}, treating
homogeneous linear programs, can be seen, at first, as closely related
to our results.  The sufficient condition fully proved and established
as our preliminary results in section \ref{sufficient}, was first
stated in \cite{tiwaricav04}. On the other hand, the sufficient
conditions proposed in \cite{Braverman, tiwaricav04} are not necessary
conditions for the termination of homogeneous linear programs and,
thus, it is not obvious that one can obtain from those results a direct encoding leading deterministically to a practical algorithm.  
The treatment in \cite{tiwaricav04} can be
divided in two parts. First, the interesting sketch of the proof for
the sufficient condition leaves space for elaboration. We completed it
in a solid mathematical way. We found obstacles that were not obvious how to circumvent. Like applying Brouwer's fixed point theorem to appropriate spaces, and having $0$ in the closure
of the orbit of a variable under the action of the transition matrix. 
The second part provides a lengthy procedure to
check for termination.  
It comprises $3$ reductions, a case
analysis, and long and costly symbolic computations (whose complexities are $max(O(n^6), O(n^{m+3}))$ where $n$ is the number of variables and $m$ the number of conditions \cite{xia2011}.).
Also ideas presented in  \cite{Braverman} are based on the approach
proposed in \cite{tiwaricav04}, while considering termination analysis over the integers.  Similar points could be raised concerning the work in  \cite{Braverman},
that is, a  
complex procedure is proposed, one that appears lengthy and costly.  
In fact, it is not
clear to us if those approaches give rise to simple and fast
algorithms. Instead, we have a more direct and clear statement which naturally
provides a simple algorithm to check termination, as illustrated by our
examples, and with much better complexity.  Moreover, we show that it is
enough to interpret the variable values over a countable number field,
or over its ring of integers, in order to determine program termination
over the reals. 

In a recent work about \emph{asymptotically nonterminating values}
($ANT$) generation \cite{TR-IC-14-03}, we also provided new and
efficient techniques to extend our results to general affine loop
programs, \emph{i.e.}, programs with several loop conditions.  
We defer this discussion to another companion article, where more practical
details will be presented, together with some experiments.  

The generated $ANT$ set can be used directly as
preconditions for termination or it can be intersected efficiently with
another given preconditions, provided by other static analysis methods
for instance.

Our main results, Theorem \ref{rachidiantheory} and its
Corollary \ref{cor2}, with a direct encoding as in Algorithm
\ref{Algo-1}, together with  the results in Section \ref{count},
guaranteeing the symbolic computation while circumventing rounding
errors, are evidences of the novelty of our approach.

\section{Conclusions}\label{conclusion}
We presented the \emph{first necessary and sufficient condition} for the termination of linear homogeneous loop programs. This condition leads to a sound and complete procedure for checking termination for this class of programs. 
The analysis of the associated algorithms shows  that the new method operates in fewer computational steps than all known routines that support the mathematical foundations of previous methods. 
Section \ref{count}, and especially the example therein, introduces the important notion of the locus of initial variables values for which a linear program terminates. 
In that example, it allowed us to decide if the program terminates on all rational
initial variables values. 
Actually, these  methods can be vastly generalized in order to treat 
the termination problem for linear programs on rational initial values. 
However, we suspect  that this development it will involve some Galois theory, as well as our results 
on asymptotically non terminating variable values \cite{TR-IC-14-03}, and so we 
prefer to pursue this investigation in  
the near future.

\bibliographystyle{splncs}
\bibliography{termination}

\end{document}